\definecolor{darkred}  {rgb}{0.5,0,0}
\definecolor{darkblue} {rgb}{0,0,0.5}
\definecolor{darkgreen}{rgb}{0,0.5,0}
\newtheorem{theorem}{Theorem}
\newtheorem{definition}[theorem]{Definition}
\newtheorem{lemma}[theorem]{Lemma}
\newtheorem{proposition}[theorem]{Proposition}
\newtheorem{remark}[theorem]{Remark}
\newenvironment{proof}[1][Proof]{\noindent\textbf{#1.} }{\hfill\IEEEQED}
\newcommand{\be}{{\mathbf e}}
\newcommand{\bx}{{\mathbf x}}
\newcommand{\by}{{\mathbf y}}
\def\cU{{\cal U}}
\def\cV{{\cal V}}        \def\cW{{\cal W}}
\def\cX{{\cal X}}
\def\cY{{\cal Y}}        \def\cZ{{\cal Z}}
\def\0{{\mathbf{0}}}
\def\1{{\mathbf{1}}}
\def\2{{\mathbf{2}}}
\def\3{{\mathbf{3}}}
\def\4{{\mathbf{4}}}
\def\5{{\mathbf{5}}}
\def\6{{\mathbf{6}}}
\def\7{{\mathbf{7}}}
\def\8{{\mathbf{8}}}
\def\9{{\mathbf{9}}}
\def\be{\begin{equation}}
\def\ee{\end{equation}}
\def\bea{\begin{eqnarray}}
\def\eea{\end{eqnarray}}
\newcommand{\mbf}[1]{\mathbf{#1}}
\newcommand{\mc}[1]{\mathcal{#1}}
\newcommand{\qed}{{\hfill\IEEEQED}}
\begin{document}

\title{The Private and Public Correlation Cost \protect\\ of Three Random Variables with Collaboration}
\author{Eric Chitambar,$^1$ Min-Hsiu Hsieh,$^2$ and Andreas Winter\,$^3$ 
\\[4mm]  
\textit{$^1$ Department of Physics and Astronomy, Southern Illinois University,}\\ 
\textit{Carbondale, Illinois 62901, USA}\\[1mm]
\textit{$^2$ Centre for Quantum Computation \& Intelligent Systems (QCIS),}\\
\textit{Faculty of Engineering and Information Technology (FEIT),}\\
\textit{University of Technology Sydney (UTS), NSW 2007, Australia}\\[1mm]
\textit{$^3$ ICREA \&{} F\'isica Te\`orica: Informaci\'o i Fen\`omens Qu\`antics} \\
\textit{Universitat Aut\`onoma de Barcelona, ES-08193 Bellaterra (Barcelona), Spain}
\\[4mm]
(12 September 2014)}


\maketitle

\begin{abstract}
In this paper we consider the problem of generating arbitrary three-party 
correlations from a combination of public and secret correlations.  
Two parties -- called Alice and Bob -- share perfectly correlated bits that 
are secret from a collaborating third party, Charlie.  At the same time, 
all three parties have access to a separate source of correlated bits, 
and their goal is to convert these two resources into multiple copies of 
some given tripartite distribution $P_{XYZ}$.  We obtain a single-letter 
characterization of the trade-off between public and private bits that 
are needed to achieve this task.  The rate of private bits is shown to 
generalize Wyner's classic notion of \textit{common information} held 
between a pair of random variables.  The problem we consider is also 
closely related to the task of \textit{secrecy formation} in which $P_{XYZ}$ 
is generated using public communication and local randomness but with 
Charlie functioning as an adversary instead of a collaborator.  
We describe in detail the differences between the collaborative and 
adversarial scenarios.
\end{abstract}

\section{Introduction}
\label{sec:intro}
Three-party correlations are central objects of interest in the discussion of 
public key agreement \cite{Maurer-1993a, Ahlswede-1993a}.  
Two parties (Alice and Bob) have access to some source which generates multiple 
copies of three random variables $XYZ$.  When $n$ copies are generated, Alice 
sees $X^n$, Bob sees $Y^n$, and a third party (Charlie) sees $Z^n$.  
In the standard key agreement scenario, Charlie is viewed as untrustworthy 
eavesdropper and Alice and Bob wish to extract perfectly shared randomness 
from $X^nY^nZ^n$ using local randomness and public communication (LOPC).  
The security constraint is that at the end of this protocol,
Charlie should be almost completely uncorrelated from Alice and Bob's shared randomness.     
 
However, in many scenarios it may not be appropriate to assume that Charlie 
is a malicious eavesdropper.  In fact, Charlie may actually be a helper or 
collaborator to Alice and Bob in their pursuit of obtaining private randomness 
from $X^nY^nZ^n$.  For instance, one might imagine that Charlie represents 
some centralized hub that wishes to establish a secure link between two of 
its users.   
Distillation problems of this sort have been studied in Ref. \cite{Csiszar-2000a};
see also Refs. \cite{Gregoratti-2003,Smolin-2005,Winter-2007} for quantum analogues.

This paper considers the reverse of the scenario just described.  
Instead of asking how much secret key can be distilled from $XYZ$ using 
LOPC, we ask how much secret key is needed to build $XYZ$ using LOPC.  
This kind of problem has been studied much less, but goes back all
the way to Wyner \cite{Wyner-1975a}, and has received much more attention
only in the last decade or so, from the \emph{Reverse Shannon Theorem} \cite{Bennett-2002}
(and its quantum generalization \cite{Bennett-2014}), more generally
to so-called coordination problems \cite{Cuff-2008,Cuff-PhD-2009}.

Whether Charlie is an adversary or collaborator greatly changes the nature of the problem,
as we shall see. 
First consider when Charlie is a collaborator.  Alice and Bob initially share perfect randomness that is secret from Charlie, and using LOPC, they generate public communication $U$ and variables $\hat{X}^n\hat{Y}^n$.  However, in the spirit of collaboration, the public communication which they generate should also be usable by Charlie to generate $\hat{Z}^n$ so that ultimately $\hat{X}^n\hat{Y}^n\hat{Z}^n\approx X^nY^nZ^n$.  We will refer to this as the \textit{collaborative model} for generating $XYZ$.  In a particular protocol, there will exist some trade-off between the amount of secret randomness Alice and Bob initially share versus the amount of public communication used to build $\hat{X}^n\hat{Y}^n\hat{Z}^n$.  The main contribution of this paper is a single-letter characterization of this trade-off (Theorem \ref{Thm:Main}).

On the other hand, when Charlie is an adversary, some care is needed to properly quantify the correlation costs of $XYZ$.  This is because here Alice and Bob really only care about generating the marginal $XY$ since, after all, Charlie is an adversary.  Nevertheless, $Z$ may contain some information about $XY$, and this should be somehow captured in the total cost for $XYZ$.  In light of these considerations, Renner and Wolf have proposed the following notion of secrecy formation \cite{Renner-2003a}.  
Starting from a source of pre-shared secret bits, Alice and Bob perform LOPC to generate three random variables $\hat{X}^n\hat{Y}^nU$, where again $U$ describes the public communication conducted during the protocol.  With $X^nY^nZ^n$ being the target distribution, the goal is for $\hat{X}^n\hat{Y}^n U$ to be approximately equivalent to some joint random variables of the form $X^nY^n\overline{Z}$, where $\overline{Z}$ is obtained by processing $Z^n$.  This latter condition means that Charlie could simulate the entire communication Alice and Bob use to produce $\hat{X}^n\hat{Y}^n$ from his part $Z^n$.  In the words of Renner and Wolf, this ``formalizes the fact that the protocol communication $U$ observed by [Charlie] does not give him more information than $Z^n$.''  We will refer to this as the \textit{adversarial model} for generating $XYZ$.  

In subsequent work, Horodecki \textit{et al.} \cite{Horodecki-2005c} discussed the
hypothesis that the minimum rate of secret bits for generating $XYZ$ in the adversarial sense 
is given by a quantity known as the intrinsic information \cite{Renner-2003a}.  
If this were true, then optimal secrecy formation could alternatively be 
obtained by an asymptotic preparation of randomly chosen private correlations 
(i.e. distributions $P_{XYZ}$ in which Charlie is completely uncorrelated from Alice and Bob).  
However, Horodecki \emph{et al.} considered this hypothesis to be likely false,
and indeed it was later shown to be so by one of us, with a single-letter secret 
key cost formula being derived \cite{Winter-2005a}.  
In the present paper, we revisit the precise trade-off between public and secret 
correlation costs computed in Ref. \cite{Winter-2005a} for the adversarial model,
in particular the direct (achievability) part of the main result of that paper.
Note that the direct part is also implicitly discussed in \cite{Horodecki-2005c},
since the formula of the secrecy cost is a convex hull over certain decompositions
of the Alice-Bob distribution, decompositions characterized by Wyner's \emph{common information} \cite{Wyner-1975a}.
The converse (lower bound) presented in \cite[Prop.~1]{Horodecki-2005c} 
however was incomplete as it assumed a property known as ``asymptotic continuity'' 
(cf.~\cite{Horodecki-2009a})
of the common information, which was shown to be false in Ref. \cite{Winter-2005a};
see also Witsenhausen \cite{Witsenhausen-1976a}. The correct optimality
proof required a much more complex argument \cite{Winter-2005a}.

Both the collaborative and adversarial models generalize Wyner's notion of common information \cite{Wyner-1975a}.  In Wyner's scenario, Alice and Bob simply want to produce $X^nY^n$ using pre-shared randomness with no additional communication.  The minimum amount of randomness per copy needed to approximately simulate $X^nY^n$ is what Wyner identifies as the common information held between $X$ and $Y$.  This can be seen as a special case of the three-party problems when $Z$ ranges over just a single value and the public communication rate is zero.

Interest in secrecy formation is largely inspired by the analogous notion of \textit{entanglement formation} when dealing with quantum systems and quantum information.  The entanglement cost of a quantum state is defined to be the asymptotic rate of pre-shared ebits that are needed to prepare many copies of the given state \cite{Hayden-2001a}.  In the quantum setting, a third party is not introduced into the definition of entanglement cost since by its very nature, quantum entanglement possesses an inherent shielding from external parties.  This latter property is sometime referred to as the \textit{monogamy of entanglement} \cite{Horodecki-2009a}.

The structure of this paper is as follows.  
We begin in Sect. \ref{Sect:LOPC_Reduction} by describing how, for the task at hand, 
all public communication generated during an LOPC protocol can be equivalently 
replaced by pre-shared public correlations at the start of the protocol and no 
further communication.  This provides a significant simplification to the problem 
since a general LOPC protocol can involve multiple rounds of communication.  
In Sect. \ref{Sect:Models}, we introduce in greater detail Wyner's model for 
generating bipartite random variables as well as the tripartite models when 
Charlie is acting either as a collaborator or as an adversary.  The main result 
of this paper is presented in Sect. \ref{Sect:Results} and its proof is given 
in Sects. \ref{Sect:Converse} and \ref{Sect:Coding}.  
The Appendix contains a reformulation of the original protocol given in \cite{Winter-2005a}.

Throughout this paper, random variables will be denoted by capital italic letters $U,V,\cdots$, etc.  The values of the these variables will be written in lower-case $u,v,\cdots$, etc., and a sequence of such values will be denoted as $\mbf{u},\mbf{v},\cdots$, etc.   The distribution of a given random variable $U$ will be interchangeably written by $P_U$ and $P(U)$.  When variables $U$ and $U'$ range over a common alphabet $\mc{U}$, their variational distance (up to a factor of $2$) is given by
\[\|P_U-P_{U'}\|_1:=\sum_{u\in\mc{U}}|P_{U}(u)-P_{U'}(u)|.\]  Finally, when three random variables form a Markov chain, this will be denoted by $W-U-V$, and it indicates that $P(WV|U)=P(W|U)P(V|U)$. Equivalently, its conditional mutual information satisfies $I(W;V|U)=0$.

\section{Replacing Public/Secret Communication by Shared Correlations}
\label{Sect:LOPC_Reduction}
Consider a general LOPC protocol in which Alice and Bob begin with $R$ perfectly correlated bits.  Specifically, Alice (resp. Bob) has variable $V_A$ (resp. $V_B$) such that $V_A=V_B=V$ and $H(V)=R$.  Alice, Bob and Charlie may also have sources of local randomness, but these can be built directly into the local processing of the variables by allowing for stochastic mappings.  The protocol will then involve a sequence of publicly announced messages $M_i$ where $M_i$ is a function of $(V,M_{<i})$.  Here, $M_{<i}=M_1\cdots M_{i-1}$ denotes all previous messages.  At the end of the protocol, the entire communication can be represented by the variable $U$.  Alice and Bob then generate random variables $\hat{X}^n$ and $\hat{Y}^n$, both as the image of some stochastic map applied to $UV$. Thus, the entire protocol can be represented by random variables $\hat{X}^n\hat{Y}^nUV$ whose distribution satisfies
\begin{equation}
  P(\hat{X}^n\hat{Y}^nUV) =P(\hat{X}^n\hat{Y}^n|UV)P(UV)=P(\hat{X}^n|UV)P(\hat{Y}^n|UV)P(UV).
\end{equation}
The particular distribution $P(UV)$ depends on the nature of the LOPC protocol, and here we are using the fact that the computations of $\hat{X}^n$ and $\hat{Y}^n$ are done locally (i.e. independently of each other).  In the collaborative model, Charlie also obtains $\hat{Z}^n$ as a function of $U$, and the distribution is given by
\begin{equation}
  P(\hat{X}^n\hat{Y}^n\hat{Z}^nUV)=P(\hat{X}^n|UV)P(\hat{Y}^n|UV)P(\hat{Z}^n|U)P(UV).
\end{equation}
Hence to simulate the random variables $\hat{X}^n\hat{Y}^n\hat{Z}^nUV$ with no communication, it suffices for all three parties to first share the random variable $U$ (which represents the public correlations of the protocol); additionally, Alice and Bob share the variable $V$ (representing the secret correlations of the protocol).  Conversely, to each distribution $P(VU)$, an LOPC protocol exists with Alice and Bob first sharing secret bits $V_A=V_B=V$ and then broadcasting $U$ according to $P(U|V)$.

\medskip
\begin{remark}
\label{Rem:LOPC}
In the next section, we introduce models in which $U$ and $V$ are uncorrelated.  
While this does not correspond directly to the most general LOPC process, 
when proving the converse in Section \ref{Sect:Converse}, we will allow for 
correlated $U$ and $V$.  Hence, the upper bound we derive on secret and public 
correlation rates will also hold in the LOPC scenario.  
In Section \ref{Sect:Coding}, we show that these lower bounds can be obtained 
by public and private correlations $U$ and $V$ which are, in fact, independent.

Alternatively, one can also directly show, by an operational argument,
that protocols with correlated $U$ and $V$ can always be asymptotically
simulated by one where public and secret correlation are independent.
\end{remark}

\section{Three Models of Correlation Generation}
\label{Sect:Models}
We now describe three different models for generating dependent random variables.  
While both the models in Sects. \ref{Sect:WynerCI} and \ref{Sect:Adversarial} 
have been well-studied, our new contribution is the model described in Sect. \ref{Sect:CoM}.

\subsection{Wyner's Common Information}
\label{Sect:WynerCI}

\begin{figure}[ht]
\begin{center}
\includegraphics[width=0.40\textwidth]{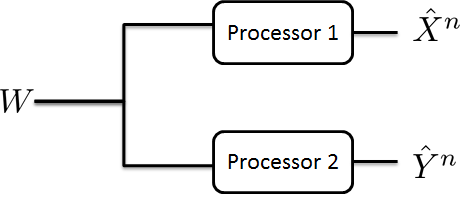}
\caption{Wyner's Common Information scenario.}
\label{Fig:Wyner}
\end{center}
\end{figure}

In this subsection, we review Wyner's notion of common information as 
well as one of its operational interpretations \cite{Wyner-1975a}. 

\medskip
\begin{definition}[Wyner \cite{Wyner-1975a}]
The common information $C(X:Y)$ between two random variables 
$X$ and $Y$ with joint distribution $Q(XY)$ is defined as
\begin{equation}\label{eq_WCI}
  C(X:Y) = \min I(XY;W),
\end{equation}
where the minimization is taken over all triples of random variables $XYW$ so that
\begin{itemize}
  \item the marginal distribution for $X,Y$ is $Q(XY)$;
  \item $X-W-Y$ forms a Markov chain. 
\end{itemize}
Furthermore, the minimum in Eq.~(\ref{eq_WCI}) can be obtained with a 
random variable $W$ ranging over sets of size no greater than $|\cX||\cY|$.
\end{definition}

\medskip
To see why this quantity might capture the notion of ``common information'' 
between $X$ and $Y$, consider the following task.  Alice and Bob have access 
to a common source $W$, and acting independently of one another, they wish 
to process $W$ in different ways so that their final joint distribution is 
a many-copy approximation of the target distribution $Q(XY)$ 
(see Fig.~\ref{Fig:Wyner}).  
The common information is the minimum rate of common randomness $W$ needed to perform this task.

More precisely, we define an $(n,R,\epsilon)$ \emph{source synthesis code} 
to consist of the following:
\begin{itemize}
\item a set $\cW$ with cardinality $\lfloor 2 ^{nR} \rfloor$;
\item conditional probability distributions $P_1^{(n)}(\mbf{x}|w)$ and $P_2^{(n)}(\mbf{y}|w)$, $w\in\cW$, on $\cX^n,\cY^n$, respectively;
\end{itemize}
such that 
\[
  \left\| Q^{(n)}-\hat{P}^{(n)} \right\|_1 \leq \epsilon,
\]
where
\begin{equation}
  \label{Eq:WynerSynthesis}
  \hat{P}^{(n)}(\mbf{x},\mbf{y}) 
       := \frac{1}{|\cW|} \sum_{w\in\cW} P_1^{(n)}(\mbf{x}|w)P_2^{(n)}(\mbf{y}|w).
\end{equation}
We say the rate $R$ is \emph{achievable} if for all $\epsilon>0$ and $n$ 
sufficiently large there exists a source synthesis code $(n,R,\epsilon)$.  
Define the correlation cost of $XY$ as
$C:=\inf \{R: (n,R,\epsilon) \text{ is achievable}\}$.

\medskip
\begin{theorem}[Wyner \cite{Wyner-1975a}]
\label{Thm:Wyner}
For any pair $XY$ of random variables, the minimum achievable rate of a source
synthesis code is given by the common information:
\[
  C=C(X:Y).
\]
\end{theorem}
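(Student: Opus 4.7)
The plan is to prove the two inequalities $C \leq C(X:Y)$ (achievability) and $C \geq C(X:Y)$ (converse).

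For the achievability direction, I would fix a Markov triple $X - W - Y$ with the prescribed $XY$-marginal $Q$ and $I(XY;W) = C(X:Y)$, and pick any rate $R > C(X:Y)$. The idea is a standard random coding / soft-covering argument: draw codewords $\mbf{w}(1), \ldots, \mbf{w}(M)$ with $M = \lfloor 2^{nR} \rfloor$ independently from $P_W^{\otimes n}$. The shared randomness is an index $j$ uniform on $[M]$; Alice samples $\mbf{x}$ from $P_{X|W}^{\otimes n}(\cdot|\mbf{w}(j))$ and Bob samples $\mbf{y}$ from $P_{Y|W}^{\otimes n}(\cdot|\mbf{w}(j))$. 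A standard soft-covering bound (via Wyner's original typical-sequence argument or the later channel-resolvability formulation) gives $\mathbb{E}\|\hat P^{(n)} - Q^{(n)}\|_1 \to 0$ whenever $R > I(XY; W)$, so for every $\epsilon > 0$ and all sufficiently large $n$ some realization of the codebook produces an $(n, R, \epsilon)$ code. Letting $R \searrow C(X:Y)$ yields $C \leq C(X:Y)$.

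For the converse, consider any $(n, R, \epsilon)$ source synthesis code with shared randomness $W$ on $\cW$, where $|\cW| \leq 2^{nR}$ and hence $H(W) \leq nR$. Because Alice and Bob encode independently conditional on $W$, the synthesized distribution $\hat P^{(n)}$ satisfies the Markov chain $X^n - W - Y^n$. Introduce a uniform time-sharing variable $T$ on $\{1, \ldots, n\}$, independent of everything else, and set $\tilde X = X_T$, $\tilde Y = Y_T$, $\tilde W = (W, T, X^{T-1}, Y^{T-1})$. The Markov chain lifts to $\tilde X - \tilde W - \tilde Y$, and a short chain-rule calculation gives
\[
  I_{\hat P}(\tilde X \tilde Y; \tilde W) = \frac{1}{n} I_{\hat P}(X^n Y^n; W) + \left[H_{\hat P}(\tilde X \tilde Y) - \frac{1}{n} H_{\hat P}(X^n Y^n)\right].
\]
The first summand is bounded by $\frac{1}{n} H(W) \leq R$; the bracketed term vanishes under the i.i.d. target $Q^{(n)}$ and, by Fannes' inequality applied separately to the two entropies on their finite alphabets, is $O(\epsilon \log|\cX||\cY|)$. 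A standard Caratheodory / support-lemma reduction then lets me replace $\tilde W$ by an equivalent Markov auxiliary on an alphabet of size at most $|\cX||\cY|$ without increasing $I_{\hat P}(\tilde X \tilde Y; \tilde W)$. Sending $\epsilon \to 0$, a convergent subsequence on the now compact simplex of distributions over the fixed finite alphabet produces a Markov triple $X - W^* - Y$ with marginal $Q$ and $I(XY; W^*) \leq R$, whence $R \geq C(X:Y)$.

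The main obstacle is precisely this continuity step: Wyner's common information is \emph{not} asymptotically continuous (as emphasized in the discussion preceding Theorem \ref{Thm:Wyner}), so one cannot simply pass to the limit of $C(\tilde X : \tilde Y)$. My plan sidesteps this by bounding only Shannon entropies (which \emph{are} continuous in variational distance via Fannes) and by reducing $\tilde W$ to a fixed finite alphabet \emph{before} taking the compactness limit, so that the routine continuity of mutual information on a compact simplex closes the argument. The soft-covering step underlying achievability is by now standard, but was historically the novel ingredient of Wyner's original proof.
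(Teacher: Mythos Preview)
Your proposal is correct. The paper does not give a standalone proof of Wyner's theorem---it is quoted as a classical result---but the paper's proof of its own generalization (Theorem~\ref{Thm:Main}) specializes, for trivial $Z$ and $V$, to exactly your argument: soft covering via Lemma~\ref{Lem:Wyner} for achievability, and for the converse the time-sharing identification $\tilde W = (W,T,X^{T-1},Y^{T-1})$ (Section~\ref{Sect:Converse}, with Proposition~\ref{Prop:MarkovReduced} supplying the single-letter Markov chain), Fannes-type continuity of the relevant entropies, the Carath\'eodory support reduction (Lemma~\ref{Lem:Carathedory}), and the final compactness limit.
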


\medskip
The key ingredient in Wyner's achievability construction is a general result 
saying that for any two random variables $U$ and $W$, the distribution of 
$U^n$ can be reliably simulated by sampling from approximately $2^{nI(U:W)}$ 
sequences among the range of $W^n$ and then applying the channel $P^n_{U|W}$ 
(see Lemma \ref{Lem:Wyner} below).  
Hence if $U=XY$ with $W$ satisfying $X-W-Y$, then this simulation can be done 
locally, as depicted and in Fig. \ref{Fig:Wyner} and described in 
Eq. \eqref{Eq:WynerSynthesis}.  This construction need not be limited to only 
two parties.  For example, one can analogously define the common information 
of three variables $XYZ$ with distribution $Q(XYZ)$ as
\begin{equation}
  \label{Eq:Wyner3CI}
  C(X:Y:Z):=\min I(XYZ;W),
\end{equation}
where the minimization is taken over all variables $XYZW$ so that
\begin{itemize}
  \item the marginal distribution for $XYZ$ is $Q(XYZ)$;
  \item $XYZ$ are conditionally independent variables given $W$.
\end{itemize}
Operationally, and in complete analogy to Wyner's Theorem \ref{Thm:Wyner},
$C(X:Y:Z)$ is the smallest rate of shared random bits $W$ that are needed 
to generate $Q(XYZ)$ when Alice, Bob and Charlie independently process $W$.  
In other words, there are now three channels $P_1^{(n)}(\mbf{x}|w)$, 
$P_2^{(n)}(\mbf{y}|w)$ and $P_3^{(n)}(\mbf{z}|w)$ so that 
\[
  Q^n(\mbf{x},\mbf{y},\mbf{z}) 
      \approx \frac{1}{|\cW|} \sum_{w\in\cW} P_1^{(n)}(\mbf{x}|w)
                                             P_2^{(n)}(\mbf{y}|w)
                                             P_3^{(n)}(\mbf{z}|w)
\]
with $\frac{1}{n}\log|\mc{W}|\leq C(X;Y;Z)+\delta$, for arbitrarily small 
$\delta$.  This is a special case of the more general three-party collaborative 
scenario that we will study below.  Specifically, when the wires connected to 
$V$ are removed in Fig. \ref{Fig:Collaborative}, we recover this scenario of 
Wyner's common information for three parties.

\subsection{Key Cost in Three-Party Adversarial Scenario}
\label{Sect:Adversarial}

\begin{figure}[ht]
\begin{center}
\includegraphics[width=0.60\textwidth]{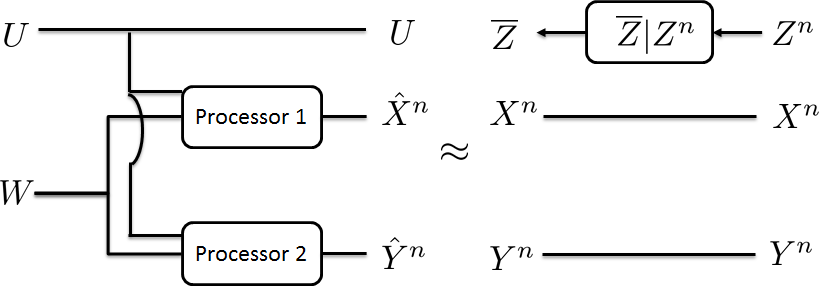}
\caption{Three-Party Adversarial Scenario.}
\label{Fig:Adversarial}
\end{center}
\end{figure}

We now consider three-party distributions generated by two sources of correlations.  First is the adversarial model where Charlie is considered to be a malicious eavesdropper.  Again, let $Q(XYZ)$ be the distribution whose correlation costs we wish to quantify.  In the adversarial model, Alice and Bob start with some initially perfectly correlated bits that are secret from Charlie.  Using LOPC, they wish to generate many copies of $Q(XY)$ so that the total public communication $U$ produced in the protocol gives Charlie no more information about $XY$ than what he has in the distribution $Q(XYZ)$.  In other words, Charlie is able to apply some local processing $\overline{Z}|Z^n$ on her share part of $Q(X^nY^nZ^n)$ so that the resulting distribution is close to the  distribution generated in the LOPC protocol.  There will be some trade-off between the amount of initial secret correlations and the amount of public communication consumed in the protocol.  Intuitively, the more perfectly correlated secret bits that Alice and Bob initially share, the less public communication they will need to generate $Q(XY)$.

By the discussion in Sect. \ref{Sect:LOPC_Reduction}, we can simulate the entire protocol having public communication $U$ by a protocol with no communication but initially shared public correlations.  The resulting scenario is depicted in Fig. \ref{Fig:Adversarial}.  The trade-off between public and private correlations in the task of secrecy formation is formally defined as follows.

\medskip
\begin{definition}[Renner \& Wolf \cite{Renner-2003a}]
\label{def:ach}
For distribution $Q(XYZ)$, an $(n, R_P, R_K,\epsilon)$ \textit{secrecy formation code} is composed of the following:
\begin{itemize}
  \item random variables $(U,V)$ having joint distribution $P(UV)$ over the set 
        $\cW_P\times\cW_K$ with cardinalities $|\cW_P|=\lfloor 2^{nR_P}\rfloor$ 
        and $|\cW_K|=\lfloor 2^{nR_K}\rfloor$ respectively;
  \item conditional distributions on $\cX^n$ and $\cY^n$,
        \[
          P_1^{(n)}(\mbf{x}|\mbf{u},\mbf{v}) \text{ and } P_2^{(n)}(\mbf{y}|\mbf{u},\mbf{v})
          \text{ for } \mbf{u}\in\mc{W}_P,\ \mbf{v}\in\mc{W}_K,
        \]
        which generate random variables $\widehat{X}^n\widehat{Y}^n$ with joint 
        distribution 
        \[
          \widehat{P}(\mbf{x},\mbf{y})
            :=\sum_{\mbf{u}\in\mc{W}_P}\sum_{\mbf{v}\in\mc{W}_K}P_1^{(n)}(\mbf{x}|\mbf{u},\mbf{v})P_2^{(n)}(\mbf{y}|\mbf{u},\mbf{v})P(\mbf{u},\mbf{v});
        \]
  \item a channel $\bar{Z}|Z^n$ such that
        \begin{equation}
          \label{eq:ach}
          \left\| Q(X^nY^n\bar{Z})- \widehat{P}(\widehat{X}^n\widehat{Y}^n U) \right\|_1\leq \epsilon.
        \end{equation}
\end{itemize}
The rate pair $(R_P,R_K)$ is \emph{achievable} if, for all $\epsilon>0$, we can find an $n$ sufficiently large such that there exists a secrecy formation code $(n,R_P,R_K,\epsilon)$.
\end{definition}

\medskip
The public-vs-secret tradeoff function is 
\[
  R_K(R_P)= \inf \bigl\{R_K: (R_P,R_K) \text{ is achievable} \bigr\},
\]
and the secret key cost of the triple $XYZ$ is 
\[
  K_c(X:Y|Z):=\lim_{R_P\to \infty} R_K(R_P).
\]

\medskip
\begin{theorem}[Winter \cite{Winter-2005a}]
\label{thm}
For the secrecy formation of the distribution $Q(XYZ)$, the rate pair $(R_P,R_K)$ is achievable iff there exist random variables $XYZUV$ such that 
\begin{equation}
R_K\geq  I(XY;V|U) \quad \text{and} \quad R_P\geq I(Z;U),
\end{equation}
where the random variables $XYZUV$ satisfy the properties
\begin{enumerate}
\item The $XYZ$ marginal distribution is $Q$;
\item The following Markov chains hold: 
\begin{equation}\label{Markov}
XY-Z-U \quad \text{and} \quad X-UV-Y.
\end{equation}
\end{enumerate}
Furthermore the auxiliary random variables w.l.o.g.~have bounded ranges:
$|\cU|\leq |\cZ|+1$ and $|\cV|\leq |\cX||\cY|$.

In particular,
\[
  K_c(X:Y|Z) = \min \bigl\{ I(XY;V|U): \text{Properties \it{1)} and \it{2)} hold} \bigr\}.
\]
is the secret key cost of $XYZ$ with unlimited public communication.
\qed
\end{theorem}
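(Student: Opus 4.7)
For the direct (achievability) part, I would fix any joint $P(XYZUV)$ with marginal $Q(XYZ)$ satisfying the two stated Markov chains, and pick $R_P>I(Z;U)$ and $R_K>I(XY;V|U)$. The plan is a superposition random codebook: draw $M_P=\lfloor 2^{nR_P}\rfloor$ \emph{cloud centers} $\mathbf{u}(i)$ i.i.d.\ from $P_U^n$ and, conditional on each, $M_K=\lfloor 2^{nR_K}\rfloor$ \emph{satellites} $\mathbf{v}(i,j)$ i.i.d.\ from $P_{V|U}^n(\cdot\,|\,\mathbf{u}(i))$. The protocol then picks $(I,J)$ uniformly, broadcasts $I$ as the public correlation and keeps $J$ as the Alice--Bob secret; Alice outputs $\hat{X}^n$ by passing $(\mathbf{u}(I),\mathbf{v}(I,J))$ through the product channel $P_{X|UV}^n$ and Bob likewise through $P_{Y|UV}^n$, with the chain $X-UV-Y$ guaranteeing that the two \emph{independent} local channels jointly reproduce the correct conditional $P_{XY|UV}^n$. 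Charlie, holding $Z^n$ from the true source, applies a likelihood decoder on the public codebook, outputting $\bar Z=i$ with probability proportional to $P_{Z|U}^n(Z^n\,|\,\mathbf{u}(i))$; the chain $XY-Z-U$ is exactly what makes $P_{Z|U}$ computable from Charlie's share alone.

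The analysis would rest on two applications of the soft-covering / channel-resolvability lemma underlying Wyner's Theorem~\ref{Thm:Wyner}: first, conditionally on each cloud center, $R_K>I(XY;V|U)$ ensures that the satellite layer synthesizes $P_{XY|U}^n(\cdot\,|\,\mathbf{u}(I))$ at the Alice--Bob output in expected total variation; second, $R_P>I(Z;U)$ combined with Charlie's likelihood decoder synthesizes the joint $P_{UZ}^n$ between $\mathbf{u}(\bar Z)$ and $Z^n$. Chaining these two bounds via the triangle inequality for total variation and taking expectation over the codebook gives
\[
  \mathbb{E}_{\mathcal{C}}\bigl\| \hat{P}(\hat{X}^n\hat{Y}^n I) - Q(X^n Y^n \bar{Z}) \bigr\|_1 \longrightarrow 0,
\]
from which a deterministic codebook is extracted by averaging. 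I expect the main obstacle to be exactly this joint coupling: the same public codebook must simultaneously play a \emph{synthesis} role (for Alice and Bob) and a \emph{resolvability} role (for Charlie's $Z^n$), both consistent with the single chosen $P(XYZUV)$, and the two Markov chains together with the superposition structure are what make this possible.

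For the converse, I would use the standard single-letterization recipe: introduce $U_k:=(U,Z^{k-1})$ and $V_k:=(V,X^{k-1},Y_{k+1}^n)$ for $k\in\{1,\ldots,n\}$, together with a uniform time index $T$, and check that $X_kY_k - Z_k - U_k$ follows from the product source plus the compatibility requirement on $U$ (so $U$ enters $(X^n,Y^n)$ only through $Z^n$), while $X_k - U_kV_k - Y_k$ follows from the locality of Alice's and Bob's channels. Entropy manipulations plus Fano-type bounds driven by the TV closeness $\|\hat{P}(\hat{X}^n\hat{Y}^n U) - Q(X^n Y^n\bar Z)\|_1\le\epsilon$ then yield $R_P\ge I(Z;U_T|T) - o(1)$ and $R_K\ge I(XY;V_T|U_T,T) - o(1)$, which absorb into single-letter bounds after setting $(U,V)\leftarrow(U_TT,V_T)$. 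The alphabet bounds $|\mathcal{U}|\le|\mathcal{Z}|+1$ and $|\mathcal{V}|\le|\mathcal{X}||\mathcal{Y}|$ come from a Carath\'eodory-type reduction on the convex set of joint distributions respecting $Q$ and both Markov chains. The subtle point (flagged in the introduction) is that one cannot simply invoke asymptotic continuity of a Wyner-type common information --- this is precisely where the argument in \cite{Horodecki-2005c} breaks down --- and the inclusion of the full auxiliary $V$ conditioned on $U$ is what circumvents the difficulty.
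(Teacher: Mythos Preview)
Your achievability argument is correct and is essentially the same as the paper's Appendix: a two-layer soft-covering construction (inner layer at rate $I(XY;V|U)$ using $X-UV-Y$, outer layer at rate $I(Z;U)$), with Charlie applying the likelihood channel $\Phi(\mathbf{u}|\mathbf{z}) \propto P^n_{Z|U}(\mathbf{z}|\mathbf{u})$ and the Markov chain $XY-Z-U$ used exactly as you describe to show that both sides collapse to $\tfrac{1}{M_P}P^n_{XY|U}(\cdot\,|\,\mathbf{u}(i))$. The only cosmetic difference is that the paper builds the inner codebook \emph{per letter} $u$ (via Lemma~\ref{Lem:Wyner} on $P(XYV|U=u)$) and concatenates, so that a single secret codebook $\mathcal{W}_K=\prod_u\alpha_u$ serves every typical $\mathbf{u}$; you instead draw satellites conditionally on each full cloud center. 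Both work.

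The converse sketch, however, has a genuine gap. Your auxiliaries $U_k=(U,Z^{k-1})$ and $V_k=(V,X^{k-1},Y_{k+1}^n)$ mix variables from two different probability spaces: $U,V$ live in the protocol distribution $\hat{P}(\hat{X}^n\hat{Y}^nUV)$, while $Z^{k-1}$ (and arguably $X^{k-1},Y_{k+1}^n$) live in the source distribution $Q(X^nY^nZ^n)$. There is no canonical joint law on which to evaluate $I(X_kY_k;V_k|U_k)$ or to verify $X_kY_k-Z_k-U_k$. The achievability condition only says $\hat{P}(\hat{X}^n\hat{Y}^nU)$ is $\epsilon$-close in total variation to $Q(X^nY^n\bar Z)$ for \emph{some} channel $\bar Z|Z^n$; it does not put $U$ into a Markov chain with $Z^n$, only $\bar Z$. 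You therefore face a choice: either work entirely on the target side (where $X^nY^n-Z^n-\bar Z$ holds exactly but the $X-UV-Y$ structure is lost), or entirely on the protocol side (where $\hat X^n-UV-\hat Y^n$ holds exactly but there is no $Z$ at all), and then transfer the missing structure via the $\epsilon$-closeness. That transfer is precisely the step where naive continuity fails (as the paper's introduction warns), and your final sentence---that keeping the full $V$ conditioned on $U$ ``circumvents the difficulty''---does not address it: the problematic Markov chain is $XY-Z-U$, which involves $U$ alone. The paper does not reproduce the converse here (it cites \cite{Winter-2005a}), but the argument there requires a careful compactness/limit construction after bounding the alphabet sizes, much like the end of Sect.~\ref{Sect:Converse} for the collaborative model; your sketch would need an analogous device to be complete.
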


\subsection{Key Cost in Three-Party Collaborative Scenario}
\label{Sect:CoM}

\begin{figure}[ht]
\begin{center}
\includegraphics[width=0.40\textwidth]{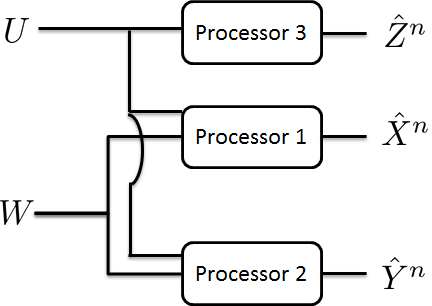}
\caption{Three-Party Collaborative Scenario.}
\label{Fig:Collaborative}
\end{center}
\end{figure}

We now shift perspectives and view Charlie as a collaborator instead 
of an adversary.  As such, for a given distribution $Q(XYZ)$, 
Alice and Bob are no longer content with just generating variables 
$\hat{X}^n\hat{Y}^n$ using LOPC that are close to the target variables $X^nY^n$.  
They also want the public communication $U$ to be sufficiently correlated with $\hat{X}^n\hat{Y}^n$ so that Charlie can locally process $\hat{Z}^n|U$ to jointly produce dependent variables $\hat{X}^n\hat{Y}^n\hat{Z}^n$ that are close to $X^nY^nZ^n$.  Like in the adversarial setting, the public communication can be replaced with initially shared correlations between all the parties (see Fig. \ref{Fig:Collaborative}).  There will also be a trade-off between public and private correlations in the following sense.

\medskip
\begin{definition}
For a given tripartite probability distribution $Q(x,y,z)$ over $\mc{X}\times\mc{Y}\times\mc{Z}$, we define an $(n,R_P,R_K,\epsilon)$ \textit{split-source synthesis code} to be composed of the following:
\begin{itemize}
\item sets $\cW_P$ and $\cW_K$ with cardinalities $\lfloor 2^{nR_P}\rfloor$ and $\lfloor 2^{nR_K}\rfloor$, respectively;
\item conditional probability distributions on $\cX^n$, $\cY^n$ and $\cZ^n$,
\[
  P_1^{(n)}(\bx|\mbf{u},\mbf{v}),\ P_2^{(n)}(\by|\mbf{u},\mbf{v}) \text{ and }  P_3^{(n)}(\mathbf{z}|\mbf{u}) 
                                              \text{ for } \mbf{u}\in \cW_P,\ \mbf{v}\in\cW_K,
\] 
which generate random variables $\widehat{X}^n\widehat{Y}^n\widehat{Z}^n$ with joint distribution 
\begin{equation}
\label{Eq:SynDist1}
\hat{P}^{(n)}(\mbf{x},\mbf{y},\mbf{z})
  = \frac{1}{|\mc{W}_P|}\frac{1}{|\mc{W}_K|}\sum_{\mbf{u}\in\cW_P}\sum_{\mbf{v}\in\cW_K} P_1^{(n)}(\bx|\mbf{u},\mbf{v})P_2^{(n)}(\by|\mbf{u},\mbf{v})P_3^{(n)}(\mathbf{z}|\mbf{u}),
\end{equation}
\end{itemize}
such that  $\| \hat{P}^{(n)}-Q^{(n)} \|_1 \leq \epsilon$.  

The rate pair $(R_P,R_K)$ is \emph{achievable} if, for all $\epsilon>0$, we can find an $n$ sufficiently large such that there exists a split-source synthesis code $(n,R_P,R_K,\epsilon)$.
\end{definition}

\section{Statement of Results}
\label{Sect:Results}
In this section we present our main result: a single-letter characterization of the trade-off between the public and private correlation rate pair in the collaborative scenario of Sect.~\ref{Sect:CoM}.

\medskip
\begin{theorem}
\label{Thm:Main}
For the split-source synthesis of the distribution $Q(XYZ)$, the rate pair $(R_P,R_K)$ is achievable iff there exist random variables $XYZUV$ such that
\begin{align}
\label{Eq:RK}
R_K&\geq I(XY;V|U)\quad \text{and} \quad R_P\geq I(XYZ;U),
\end{align}
where all random variables $XYZUV$ satisfy the following properties
\begin{enumerate}
\item The $XYZ$ marginal distribution is $Q$;
\item $X-VU-Y$ and $XY-U-Z$ form Markov chains.
\end{enumerate}
Furthermore, the random variables $U$ and $V$ in Eq. \eqref{Eq:RK} can be restricted to sets of size no greater than $|\mc{X}||\mc{Y}||\mc{Z}|$ and $|\mc{X}||\mc{Y}|$, respectively.
\end{theorem}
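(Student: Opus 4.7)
The plan is to prove the theorem in two parts---direct (achievability) and converse (optimality)---and then obtain the cardinality bounds by a Carath\'eodory-type argument. The overall structure closely parallels Theorem~\ref{thm} (the adversarial case), with the key conceptual difference that Charlie now builds $Z$ from $U$ rather than vice-versa, replacing the Markov chain $XY{-}Z{-}U$ by $XY{-}U{-}Z$. This suggests a layered random code, because $R_P$ must simultaneously account for delivering the correct $U$ to all three parties and for simulating $Z$ from $U$ at Charlie.

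For achievability, fix random variables $XYZUV$ satisfying the two Markov conditions and consider a two-layer random codebook. The outer codebook consists of $\lfloor 2^{nR_P}\rfloor$ sequences $\mbf{u}_{m_P}$ drawn i.i.d.\ from $P_U^n$; for each outer codeword $\mbf{u}_{m_P}$, an inner codebook of $\lfloor 2^{nR_K}\rfloor$ sequences $\mbf{v}_{m_P,m_K}$ is drawn position-wise from $P_{V|U}(\cdot\mid u_i)$. Given a uniformly chosen index pair $(M_P,M_K)$, Alice applies the channel $P_{X|UV}^n$ to $(\mbf{u}_{M_P},\mbf{v}_{M_P,M_K})$, Bob applies $P_{Y|UV}^n$, and Charlie applies $P_{Z|U}^n$ to $\mbf{u}_{M_P}$ alone. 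The two Markov chains guarantee $P_{XYZ|UV}=P_{X|UV}\,P_{Y|UV}\,P_{Z|U}$, so had $(U^n,V^n)$ been truly i.i.d.\ from $P_{UV}^n$ the output would be exactly $Q^n$. Sampling from the codebook requires a two-step soft-covering (Wyner resolvability) argument: the inner rate $R_K\geq I(XY;V|U)$ ensures that, for typical $\mbf{u}$, the conditional distribution of $(\hat X^n,\hat Y^n)$ is close to $P_{XY|U}^n(\cdot\mid\mbf{u})$, and then the outer rate $R_P\geq I(XYZ;U)$ ensures that the full $(\hat X^n,\hat Y^n,\hat Z^n)$-distribution is close to $Q^n$. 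The detailed random-coding analysis is essentially a reformulation of the protocol given in the Appendix (originally in \cite{Winter-2005a}).

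For the converse, suppose $(R_P,R_K)$ is achievable. Section~\ref{Sect:LOPC_Reduction} provides the factorization
\[
  P(\hat X^n\hat Y^n\hat Z^n UV) = P(\hat X^n|UV)\,P(\hat Y^n|UV)\,P(\hat Z^n|U)\,P(UV), \qquad \|\hat P^{(n)}-Q^{(n)}\|_1\leq\epsilon.
\]
The bare bounds $nR_P\geq H(U)\geq I(U;\hat X^n\hat Y^n\hat Z^n)$ and $nR_K\geq H(V)\geq I(V;\hat X^n\hat Y^n|U)$ must be single-letterized. I would introduce a uniform time-sharing index $T\in\{1,\ldots,n\}$ independent of everything, and set $X=\hat X_T$, $Y=\hat Y_T$, $Z=\hat Z_T$, $V^\star := V$, and
\[
  U^\star := (U,\,T,\,\hat X^{T-1},\,\hat Y^{T-1},\,\hat Z^{T-1}).
\]
The Markov chain $X{-}U^\star V^\star{-}Y$ holds because $\hat X^n\perp\hat Y^n\mid UV$, while $XY{-}U^\star{-}Z$ holds because $\hat Z^n\perp(\hat X^n,\hat Y^n,V)\mid U$. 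A chain-rule computation then yields
\[
  nI(U^\star;XYZ) = I(U;\hat X^n\hat Y^n\hat Z^n) + \Bigl(\, \sum_i H(\hat X_i\hat Y_i\hat Z_i) - H(\hat X^n\hat Y^n\hat Z^n) \,\Bigr) + nI(T;XYZ),
\]
while $nI(V^\star;XY|U^\star) = I(V;\hat X^n\hat Y^n|U)$ (the $\hat Z^{T-1}$ in $U^\star$ is inert because $V\perp\hat Z^n\mid U$). Because $\hat P^{(n)}$ is $\epsilon$-close in total variation to the product distribution $Q^{(n)}$, the Fannes--Audenaert inequality bounds both the parenthesised ``deficit-from-independence'' term and $nI(T;XYZ)$ by $n\delta(\epsilon)$ with $\delta(\epsilon)\to 0$. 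A standard continuity argument on the single-letter rate region then replaces the near-$Q$ marginal of $(X,Y,Z)$ by $Q$ itself.

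The cardinality bounds $|\mc V|\leq|\mc X||\mc Y|$ and $|\mc U|\leq|\mc X||\mc Y||\mc Z|$ follow by the Fenchel--Eggleston strengthening of Carath\'eodory's theorem: for each $u$, the support of $V\mid U{=}u$ need only preserve the $|\mc X||\mc Y|-1$ free parameters of $P_{XY|U=u}$ together with the scalar $I(XY;V|U{=}u)$; and the support of $U$ need only preserve the $|\mc X||\mc Y||\mc Z|-1$ free parameters of $Q$ together with the scalar $I(XYZ;U)$. The main technical obstacle I anticipate is the converse single-letterization: the displayed identity above is exact, but it introduces the inter-letter correction $\sum_i H(\hat X_i\hat Y_i\hat Z_i) - H(\hat X^n\hat Y^n\hat Z^n)$, which can only be driven to zero by invoking the strong hypothesis $\hat P^{(n)}\approx Q^{(n)}$ and continuity of entropy. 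Once this correction is absorbed, the two Markov chains follow automatically from the protocol's structural factorization, and the achievability layer-by-layer soft-covering argument, while standard in form, must be executed carefully because the inner and outer randomisation play different operational roles (private vs.\ public).
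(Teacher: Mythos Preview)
Your proposal is correct and follows essentially the same architecture as the paper: two-layer soft covering for achievability (the paper applies Wyner's Lemma first with $A=XYZ$, $B=U$ and then, per letter $u$, with $A=XY|_{U=u}$, $B=V|_{U=u}$, which is a minor variant of your superposition codebook), and a time-sharing/chain-rule single-letterization for the converse with the same auxiliary $U^\star=(U,T,\hat X^{T-1}\hat Y^{T-1}\hat Z^{T-1})$. The paper handles the $\epsilon$-approximation by a change of measure---defining $P(\tilde X\tilde Y\tilde Z\tilde V\tilde U\mid W)=\hat P(\hat X_J\hat Y_J\hat Z_JVU\mid\hat W)$ with $W$ the i.i.d.\ past---rather than your Fannes bound on $\sum_t H(\hat X_t\hat Y_t\hat Z_t)-H(\hat X^n\hat Y^n\hat Z^n)$, but both routes land on the same compactness/limit argument after invoking the Carath\'eodory reduction.

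One small omission: your cardinality bound for $U$ preserves only $Q$ and $I(XYZ;U)$, but you must simultaneously control $I(XY;V|U)$ when thinning the support of $U$; the paper's Lemma~\ref{Lem:Carathedory} does this by fixing $P(XYZ)$ and $H(XYZ|U)$ and then \emph{minimizing} $I(XY;V|U)$ over the resulting simplex, which costs an extra constraint (hence the $|\mc X||\mc Y||\mc Z|+1$ appearing there).
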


\medskip
Fig.~\ref{Fig:rateregion} illustrates the two-dimensional achievable rate region for the collaborative synthesis of a tripartite distribution. Theorem~\ref{Thm:Main} determines the nontrivial corner point $\alpha$ in Fig.~\ref{Fig:rateregion}: indeed, the public correlation rate at $\alpha$ is given precisely by $C(XY:Z)$, the Wyner common information between $XY$ and $Z$.  Another corner point $\beta$ is when $R_K=0$: here, the problem reduces to the three-party Wyner common information as described in Sect. \ref{Sect:WynerCI}.  Hence, the public correlation rate at $\beta$ is given by $C(X:Y:Z)$.  From this we see that Theorem \ref{Thm:Main} generalizes the notion of Wyner's common information (Theorem \ref{Thm:Wyner}) when $Z$ is trivial.  

\begin{figure}[ht]
\begin{center}
\includegraphics[width=0.5\textwidth]{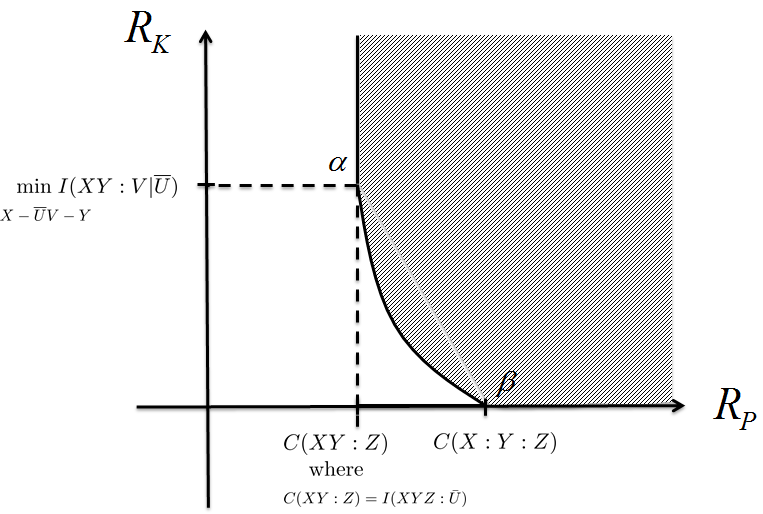}
\caption{Achievable rate region for the collaborative costs of a tripartite distribution using public and secret correlations. The rate pair of the point $\alpha$ (minimal public correlation)
is given after Theorem~\ref{Thm:Main}, as is the point $\beta$ (vanishing private correlation).
Note that, as we can always substitute trivially one bit of private correlation by one
bit of public correlation, the line connecting these points has slope at least as steep
as $-1$.}
\label{Fig:rateregion}
\end{center}
\end{figure}

Any rate pair on the line connecting $\alpha$ with $\beta$ can be achieved by \emph{time-sharing} the two protocols that achieve $\alpha$ and $\beta$, respectively.  What can be said about this line connecting $\alpha$ and $\beta$?  Denote the rate pairs at these points by $(R_P^{(\alpha)},R_K^{(\alpha)})$ and $(R_P^{(\beta)},R_K^{(\beta)})$ respectively.  Since $C(XY:Z)\leq C(X:Y:Z)$, the slope of the connecting line will always be negative.  On the other hand, since the secret correlations of any protocol can always be converted to public correlations, we have that the 
\begin{equation}
  \label{Eq:Rate-Tradeoff}
  R_P^{(\beta)}\leq R_P^{(\alpha)}+ R_K^{(\alpha)}.
\end{equation}
Hence the boundary line connecting $\alpha$ and $\beta$ must have a slope not exceeding $-45^\circ$.  For some distributions, Eq. \ref{Eq:Rate-Tradeoff} is an equality while for others it is not.  In the latter cases, the optimal exchange between private and public correlations is nontrivial and not a simple publication of the private correlations.

\medskip
\noindent\textit{Example 1:}
Let $P_{XYZ}$ be any distribution such that $H(Z|XY)=0$, i.e. $Z$ is a function of $XY$.  Then it is easy to verify that the optimal $R_K$ and $R_P$ tradeoff is one-to-one, and thus the line connecting $\alpha$ and $\beta$ has a slope of $-45^\circ$.  First consider the point $\alpha$.  Here the public correlation rate is given by $R_P^{(\alpha)}=\min_U I(XYZ;U)$ such that $XY-U-Z$.  It is easy to show that $I(XYZ;U)\geq I(XY;Z)$, and this lower bound can be attained by $U=Z$ since $H(Z|XY)=0$.  Therefore, $R_P^{(\alpha)}=H(Z)$ and $R_K^{(\alpha)}=\min_V I(XY;V|Z)$ where $X-ZV-Y$.  Let  $\hat{V}$ denote the variable attaining this minimum.  Now consider the point $\beta$ where $R_K^{(\beta)}=0$ and $R_P^{(\beta)}=\min_U I(XYZ;U)$ such that $XYZ$ are conditionally independent given $U$.  Then
\begin{align}
\min_U I(XYZ;U)&=\min_U \left(I(Z;U)+I(XY;U|Z)\right)\notag \\
&\geq I(Z;XY)+\min_U I(XY;U|Z)\notag\\
&\geq I(Z;XY)+  I(XY;\hat{V}|Z)\notag\\
&=H(Z)+  I(XY;\hat{V}|Z),
\end{align}
where the first inequality is data processing since $Z-U-XY$; the second inequality is obtained since if $XYZ$ are conditionally independent given $U$, then $X-ZU-Y$; and the last equality follows because $H(Z|XY)=0$.  Combining with Eq. \eqref{Eq:Rate-Tradeoff}, we see that
\[R_P^{(\beta)}= R_P^{(\alpha)}+ R_K^{(\alpha)}.\]

\medskip
\noindent\textit{Example 2:}
Next, we consider a very simple distribution $P_{XYZ}$ over $\{0,1,2\}^{\times 3}$ 
with the only nonzero values being $P(x,y,z)=\frac{1}{5}$ for $P(2,2,z)$ with 
$z\in\{0,1,2\}$ and $P(x,y,2)$ with $x=y\in\{0,1\}$.  
This belongs to a more general class of ``L-shaped'' distributions studied 
by Witsenhausen \cite[Thm.~7]{Witsenhausen-1976a}. From his result, the 
common information $C(XY:Z)$ is found to be
\begin{align}
R_P^{(\alpha)}=C(XY:Z)=\tfrac{4}{5}\left(\log\tfrac{4}{5}-\log\tfrac{2}{5}\right)\approx .693.
\end{align}
This is computed from the optimal decomposition of $P_{XYZ}$ into conditionally independent parts:
\[\tfrac{1}{5}\left(\begin{smallmatrix}0&0&1\\0&0&1\\1&1&1\end{smallmatrix}\right)=\tfrac{1}{10}\left(\begin{smallmatrix}2\\2\\1\end{smallmatrix}\right)(0\;0\;1)+\tfrac{1}{10}\left(\begin{smallmatrix}0\\0\\1\end{smallmatrix}\right)(2\;2\;1).\]
Here we have grouped $XY$ into one variable ranging over $\{0,1,2\}$ so that the $(i,j)$ element of the matrix is $P_{XYZ}(i,i,j)$.  From this decomposition, we see that
\begin{equation}
R_K^{(\alpha)}=\min_V I(XY;V|U)=-\tfrac{1}{2}\left(\tfrac{4}{5}\log\tfrac{2}{5}+\tfrac{1}{5}\log\tfrac{1}{5}\right).
\end{equation}
For the corner point $\beta$, we observe that $XYZ$ are conditionally independent given $X$.  Hence,
\begin{equation}
R_P^{(\beta)}\leq H(X)=-\tfrac{3}{5}\log\tfrac{3}{5}-\tfrac{2}{5}\log\tfrac{1}{5}.
\end{equation}
Hence,
\begin{equation}
R_P^{(\alpha)}+R_K^{(\alpha)}\approx 1.08 >R_P^{(\beta)}\approx .950,
\end{equation}
and so the optimal private to public exchange for this distribution is not achieved by simply publicly revealing private correlations.

\bigskip
In the next two sections, we will prove Theorem \ref{Thm:Main}, first the
converse (Sect. \ref{Sect:Converse}), then the direct part 
(Sect. \ref{Sect:Coding}).

\section{Converse}
\label{Sect:Converse}
Here we derive lower bounds that hold for more general models than a synthesis code.  Specifically, we assume that $\hat{X}\hat{Y}\hat{Z}UV$ is given along with conditional probabilities $P_1^{(n)}$, $P_2^{(n)}$, and $P^{(n)}_3$ such that the generated distribution
\begin{equation}
\label{Eq:SynDist2}
\hat{P}^{(n)}(\mbf{x},\mbf{y},\mbf{z})=\sum_{u\in\cW_P}\sum_{v\in\cW_K} P(u,v) P_1^{(n)}(\bx|u,v)P_2^{(n)}(\by|u,v)P_3^{(n)}(\mathbf{z}|u), 
\end{equation}
satisfies $\|\hat{P}^{(n)}-Q^{(n)}\|_1 \leq \epsilon$. Note that the local processing in the secrecy formation protocol imposes that $\hat{X}^n-VU-\hat{Y}^n$ and $\hat{X}^n\hat{Y}^n-U-\hat{Z}^n$ form Markov chains. However, unlike a synthesis code defined in Sect.~\ref{Sect:CoM}, we do not require that $U$ and $V$ are independent.  This relaxation enables to simulate LOPC protocols as discussed in Remark \ref{Rem:LOPC}.


Following the argument in \cite{Winter-2005a}, monotonicity and the chain rule allow us to write
\begin{align}
R_K\geq \frac{1}{n}\log|\mc{W}_K|&\geq\frac{1}{n}I(\hat{X}^n\hat{Y}^n;V|U)\notag\\
&=\sum_{j=1}^n\frac{1}{n}I(\hat{X}_j\hat{Y}_j;V|U\hat{X}_{<j}\hat{Y}_{<j})\notag\\
&=I(\hat{X}_J\hat{Y}_J;V|UJ\hat{X}_{<J}\hat{Y}_{<J})\notag\\
&=I(\hat{X}_J\hat{Y}_J;V|UJ\hat{X}_{<J}\hat{Y}_{<J}\hat{Z}_{<J}),
\end{align}
where $J\in\{1,\cdots n\}$ is a uniformly distributed variable and the last equality follows from the conditional independence $\hat{X}^n\hat{Y}^n-U-\hat{Z}^n$ (see Proposition~\ref{Prop:MarkovReduced} below).  We next introduce the following random variables $\hat{W}:=J\hat{X}_{<J}\hat{Y}_{<J}\hat{Z}_{<J}$ and $W:=JX_{<J}Y_{<J}Z_{<J}$ and the variables $\tilde{U}\in\mc{W}_P$, $\tilde{V}\in\mc{W}_K$, $\tilde{X}\in\mc{X}$, $\tilde{Y}\in\mc{Y}$, and $\tilde{Z}\in\mc{Z}$ defined through the joint distributions
\begin{align}
\label{Eq:NewVariables}
P(\tilde{X}\tilde{Y}\tilde{Z}\tilde{V}\tilde{U}|W)&=\hat{P}(\hat{X}_J\hat{Y}_J\hat{Z}_JVU|\hat{W}).
\end{align}
Then
\begin{align}
\label{Eq:ProbApprox1}
\hat{P}(\hat{X}_J\hat{Y}_J\hat{Z}_J\hat{W})&=\hat{P}(\hat{X}_J\hat{Y}_J\hat{Z}_J|\hat{W})\hat{P}(\hat{W})=P(\tilde{X}\tilde{Y}\tilde{Z}|W)\hat{P}(\hat{W})\notag\\
&=P(\tilde{X}\tilde{Y}\tilde{Z}W)+P(\tilde{X}_J\tilde{Y}\tilde{Z}_J|W)[\hat{P}(\hat{W})-P(W)].
\end{align}
At the same time, applying the triangle inequality to 
$\|\hat{P}^{(n)}(\hat{X}^n\hat{Y}^n\hat{Z}^n)-Q^{(n)}(X^nY^nZ^n)\|_1 \leq \epsilon$ 
allows us to conclude that
\begin{equation}
\label{Eq:ProbApprox2}
\|\hat{P}(\hat{X}_J\hat{Y}_J\hat{Z}_J\hat{W})-P(X_JY_JZ_JW)\|_1\leq \epsilon,
\end{equation}
and therefore $\|P(\hat{W})-P(W)\|_1 \leq \epsilon$.  Combining the latter with Eqns. \eqref{Eq:ProbApprox1} and \eqref{Eq:ProbApprox2} yields
\begin{align}
\label{Eq:ProbApprox3}
\|P(\tilde{X}\tilde{Y}\tilde{Z}W)-P(X_JY_JZ_JW)\|_1\leq 2\epsilon.
\end{align}
Since $X^nY^nZ^n$ are i.i.d., the marginal distribution of $P(X_JY_JZ_JW)$ is $Q(XYZ)$.  Hence, the previous inequality gives
\begin{equation}
\label{Eq:ProbApprox4}
\|P(\tilde{X}\tilde{Y}\tilde{Z})-Q(XYZ)\|_1\leq 2\epsilon.
\end{equation}
Eq. \eqref{Eq:NewVariables} also gives that $P(\tilde{V}|\tilde{U}W)=\hat{P}(V|U\hat{W})$ and $P(\tilde{X}\tilde{Y}\tilde{Z}|\tilde{U}W)=\hat{P}(\hat{X}_J\hat{Y}_J\hat{Z}_J|U\hat{W})$.  The Markov conditions $\hat{X}_J\hat{Y}_J-U\hat{W}-\hat{Z}_J$, $\hat{X}_J-VU\hat{W}-\hat{Y}_J$, and $\hat{Z}_J-U\hat{W}-V$ therefore 
imply 
\begin{equation}
\label{Eq:MarkovNew}
  \tilde{X}\tilde{Y}-\tilde{\tilde{U}}-\tilde{Z}
  \quad\text{and}\quad 
  \tilde{X}-\tilde{V}\tilde{\tilde{U}}-\tilde{Y},
\end{equation}
where $\tilde{\tilde{U}}:=\tilde{U}W$.  


{From Eq. \eqref{Eq:NewVariables}, we have $\hat{P}(\hat{X}_J\hat{Y}_JV|U\hat{W})P(U|\hat{W})=P(\tilde{X}\tilde{Y}\tilde{V}|\tilde{U}W)P(\tilde{U}|W)$ which further implies that $\hat{P}(\hat{X}_J\hat{Y}_JV|U\hat{W})=P(\tilde{X}\tilde{Y}\tilde{V}|\tilde{U}W)$ since $\hat{P}(U|\hat{W})=P(\tilde{U}|W)$, again by Eq.  \eqref{Eq:NewVariables}.  Thus, for each fixed value of $w$, we have that 
\[I(\hat{X}_J\hat{Y}_J;V|U\hat{W}=w)\hat{P}(U|\hat{W}=w)=I(\tilde{X}\tilde{Y};\tilde{V}|\tilde{U}W=w)P(\tilde{U}|W=w).\]
Multiply both sides by $P(\hat{W}=w)$ and take the sum.  Using the fact that    $\|P(\hat{W})-P(W)\|_1\leq \epsilon$ and the triangle inequality lead to:}

\begin{align}
|I(\hat{X}_J\hat{Y}_J;V|U\hat{W})-I(\tilde{X}\tilde{Y};\tilde{V}|\tilde{\tilde{U}})|\leq\epsilon\log|\mc{X}||\mc{Y}|,\notag
\end{align}
hence,
\begin{equation}
\label{Eq:RKBound}
R_K\geq I(\tilde{X}\tilde{Y};\tilde{V}|\tilde{\tilde{U}})-\epsilon\log|\mc{X}||\mc{Y}|.
\end{equation}
By the same arguments, we can bound the public communication as  
\begin{align}
R_P\geq\frac{1}{n}\log|\mc{W}_P|&\geq \frac{1}{n}I(\hat{X}^n\hat{Y}^n\hat{Z}^n;U)= I(\hat{X}_J\hat{Y}_J\hat{Z}_J:U|\hat{W})\notag\\
&\geq I(\tilde{X}\tilde{Y}\tilde{Z};\tilde{U}|W)-\epsilon\log|\mc{X}||\mc{Y}||\mc{Z}|\notag\\
&=I(\tilde{X}\tilde{Y}\tilde{Z};\tilde{\tilde{U}})-I(\tilde{X}\tilde{Y}\tilde{Z};W)-\epsilon\log|\mc{X}||\mc{Y}||\mc{Z}|.\notag
\end{align}
To bound the term $I(\tilde{X}\tilde{Y}\tilde{Z};W)=H(\tilde{X}\tilde{Y}\tilde{Z})-\sum_{w}H(\tilde{X}\tilde{Y}\tilde{Z}|W=w)P(W=w)$, we recall a well-known continuity relation: Any two random variables $A$ and $A'$ ranging over $\mc{A}$ with $\delta:=\|P(A)-P(A')\|_1\leq 1/2$ satisfy $|H(A)-H(A')|\leq-\delta\log\tfrac{\delta}{|\mc{A}|}$ \cite{Csiszar-2011a}.  Therefore, using Eq. \eqref{Eq:ProbApprox3} and the fact that $I(X_JY_JZ_J;W)=0$, we readily obtain
\begin{equation}
\label{Eq:RPBound}
R_P\geq I(\tilde{X}\tilde{Y}\tilde{Z};\tilde{\tilde{U}})+4\epsilon\log 2\epsilon-5\epsilon\log|\mc{X}||\mc{Y}||\mc{Z}|.
\end{equation}

At this point we have constructed random variables $\tilde{X}\tilde{Y}\tilde{Z}\tilde{V}\tilde{\tilde{U}}$ that satisfy Eqns. \eqref{Eq:ProbApprox4}--\eqref{Eq:RPBound}.  By Lemma \ref{Lem:Carathedory}, we can assume without loss of generality that $\tilde{V}$ and $\tilde{\tilde{U}}$ range over sets of size no greater than $|\mc{X}||\mc{Y}||\mc{Z}|$.  Hence, the set of random variables satisfying Eqns. \eqref{Eq:ProbApprox4}--\eqref{Eq:RPBound} is compact, and therefore a limit point will exist which also satisfies these constraints when taking $\epsilon\to 0$.  
This proves the lower bound of Theorem \ref{Thm:Main}. \qed

\medskip
\begin{proposition}
\label{Prop:MarkovReduced}
If $n$-part random variables $A^n$ and $B^n$ satisfy $A^n-C-B^n$, then the reduced variables 
$A_j$ and $B_{k}$ satisfy $A_j-CA_{<j}B_{<k}-B_{k}$ for any 
$1\leq j,k\leq n$, where $A_{<j}=A_1\ldots A_{j-1}$ and likewise 
$B_{<k} = B_1\ldots B_{k-1}$.
\end{proposition}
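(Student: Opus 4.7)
The plan is to unpack the Markov condition $A^n - C - B^n$ as the conditional factorization $P(A^n B^n \mid C) = P(A^n \mid C) P(B^n \mid C)$, and then show that marginalization and conditioning on the prefix variables preserve this product structure. The whole argument is essentially routine once the right marginalization is identified; there is no serious obstacle, only bookkeeping with conditional distributions.

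First, I would marginalize the full joint distribution over the ``future'' components $A_{>j} := A_{j+1}\ldots A_n$ and $B_{>k} := B_{k+1}\ldots B_n$. Because $A^n$ and $B^n$ are conditionally independent given $C$, the marginal inherits the product form:
\begin{equation}
P(A_{\le j}, B_{\le k} \mid C) = P(A_{\le j} \mid C)\, P(B_{\le k} \mid C),
\end{equation}
and similarly $P(A_{<j}, B_{<k} \mid C) = P(A_{<j} \mid C)\, P(B_{<k} \mid C)$. This is just the standard fact that any pair of functions of conditionally independent variables are themselves conditionally independent.

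Next I would divide the two displayed identities to obtain the conditional distribution of $(A_j, B_k)$ given $C A_{<j} B_{<k}$:
\begin{equation}
P(A_j, B_k \mid C, A_{<j}, B_{<k}) = P(A_j \mid C, A_{<j})\, P(B_k \mid C, B_{<k}).
\end{equation}
Finally, I would note that $A_j A_{<j}$ is a function of $A^n$ and hence, given $C$, is independent of $B_{<k}$; this yields $P(A_j \mid C, A_{<j}) = P(A_j \mid C, A_{<j}, B_{<k})$, and symmetrically $P(B_k \mid C, B_{<k}) = P(B_k \mid C, A_{<j}, B_{<k})$. Substituting these into the previous display gives $P(A_j, B_k \mid C, A_{<j}, B_{<k}) = P(A_j \mid C, A_{<j}, B_{<k})\, P(B_k \mid C, A_{<j}, B_{<k})$, which is precisely the Markov chain $A_j - C A_{<j} B_{<k} - B_k$. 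Equivalently, one can phrase the argument in one line via $I(A_j; B_k \mid C A_{<j} B_{<k}) \le I(A^n; B^n \mid C) = 0$ using the chain rule for mutual information and nonnegativity of each term.
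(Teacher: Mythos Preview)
Your proof is correct and follows essentially the same route as the paper: marginalize the conditional independence $A^n-C-B^n$ down to $A_{\le j}-C-B_{\le k}$, then divide by the prefix distribution to obtain the factorization of $P(A_jB_k\mid C A_{<j}B_{<k})$. The paper stops at the identity $P(A_jB_k\mid C A_{<j}B_{<k})=P(A_j\mid A_{<j}C)P(B_k\mid B_{<k}C)$ and concludes directly, whereas you add the (harmless) extra step of rewriting each factor with the full conditioning set and also note the one-line mutual-information alternative.
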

\begin{proof}
Consider the marginal distribution $A_jA_{<j}-C-B_{k}B_{<k}$.  Then 
\begin{align}
P(A_jB_{k}|CA_{<j}B_{<k})&=\frac{P(A_jA_{<j}B_{k}B_{<k}|C)}{P(A_{<j}B_{<k}|C)},\\
&=\frac{P(A_jA_{<j}|C)}{P(A_{<j}|C)}\frac{P(B_{k}B_{<k}|C)}{P(B_{<k}|C)},\notag\\
&=P(A_j|A_{<j}C)P(B_{k}|B_{<k}C).\notag
\end{align}
Therefore, $A_j-CA_{<j}B_{<k}-B_{k}$.
\end{proof}

\medskip
\begin{lemma}
\label{Lem:Carathedory}
Suppose that $XYZUV$ are random variables with $UV$ ranging over $\mc{U}\times\mc{V}$ such that $XY-U-Z$ and $X-UV-Y$.  Then there exists random variables $X'Y'Z'V'U'$ satisfying the same Markov chain and 
\begin{subequations}
\label{Eq:RVcons}
\begin{align}
I(X'Y'Z';U')&=I(XYZ;U),\\
P(X'Y'Z')&=P(XYZ),\\
I(X'Y';V'|U')&\leq I(XY;V|U),
\end{align}
\end{subequations}
with $U'$ and $V'$ ranging over sets $\mc{U}'$ and $\mc{V'}$ of sizes $|\mc{U}'|\leq |\mc{X}||\mc{Y}||\mc{Z}|+1$ and $|\mc{V}'|\leq |\mc{X}||\mc{Y}||\mc{Z}|$.  Furthermore, if $Z-U-V$ also holds, then the size of $\mc{V}'$ can be further reduced to $|\mc{V}'|\leq |\mc{X}||\mc{Y}|$.
\end{lemma}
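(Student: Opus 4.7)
The plan is to apply a Caratheodory-style support lemma (in the El Gamal--Kim / Csisz\'ar--K\"orner form, under which preserving $d$ real-valued functions requires support size at most $d$) in two successive stages: first to reduce $|\mc{U}|$, and then, for each fixed value of the reduced variable $U'$, to reduce $|\mc{V}|$. At each stage I retain the full conditional distribution at every surviving support point, so that the required Markov chains are inherited automatically.

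Stage 1 (reducing $|\mc{U}|$): hold the conditional law $P(XYZV|U=u)$ fixed as a function of $u \in \mc{U}$ and apply the support lemma to the distribution $P(U)$, preserving (a) $u \mapsto P(XYZ=xyz|U=u)$ for each $(x,y,z) \in \mc{X} \times \mc{Y} \times \mc{Z}$ except one ($|\mc{X}||\mc{Y}||\mc{Z}|-1$ functions, which pin down $P(XYZ)$); (b) $u \mapsto H(XYZ|U=u)$, which together with (a) preserves $I(XYZ;U)$; and (c) $u \mapsto I(XY;V|U=u)$, which preserves $I(XY;V|U)$ with equality (stronger than the required inequality). This produces a distribution on at most $|\mc{X}||\mc{Y}||\mc{Z}|+1$ points of $\mc{U}$, and both Markov chains $XY-U'-Z$ and $X-U'V-Y$ carry over because the retained conditional law at each surviving $u$ already satisfies them.

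Stage 2 (reducing $|\mc{V}|$): for each $u' \in \mc{U}'$, apply the support lemma to the conditional distribution $P(V|U'=u')$, keeping $P(XYZ|V=v,U'=u')$ fixed at each surviving $v$. In the general case, preserve (a$'$) the $|\mc{X}||\mc{Y}||\mc{Z}|-1$ independent entries of $P(XYZ|U'=u')$ and (b$'$) the average $\sum_v P(V=v|U'=u')\, H(XY|V=v,U'=u')$, which pins down $I(XY;V|U'=u')$; this gives at most $|\mc{X}||\mc{Y}||\mc{Z}|$ support points for each $u'$, and after relabeling we obtain $|\mc{V}'| \leq |\mc{X}||\mc{Y}||\mc{Z}|$. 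If additionally $Z-U-V$ holds, then $P(Z|V=v,U'=u')=P(Z|U'=u')$ does not depend on $v$, so (a$'$) reduces to preserving only the $|\mc{X}||\mc{Y}|-1$ entries of $P(XY|U'=u')$; combined with (b$'$) this yields $|\mc{V}'| \leq |\mc{X}||\mc{Y}|$. The chain $X-U'V'-Y$ survives because the product structure $P(X|V=v,U')\, P(Y|V=v,U')$ is retained at each surviving $v$, and in the special case $Z-U'-V'$ survives because we only retain values of $v$ at which $P(Z|V=v,U')$ already equals $P(Z|U')$.

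The step to watch is not any delicate inequality but the bookkeeping: one must verify that fixing the original conditional distribution at each surviving support point automatically preserves every Markov relation (by direct substitution), and that each cardinality bound lines up with the $d$-functions-to-$d$-support-points form of the support lemma, with the probability normalization counted implicitly.
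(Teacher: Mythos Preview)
Your two-stage Carath\'eodory reduction is essentially the paper's argument, with the order of the stages reversed (the paper reduces $V$ first, then $U$) and with $I(XY;V|U)$ preserved exactly rather than relaxed to an inequality. One point needs care, however: the ``$d$ preserved functions $\Rightarrow$ support size $d$'' form of the support lemma (Fenchel--Eggleston) draws the new conditional laws from a connected simplex, not from the original finite set $\{P(\,\cdot\,|U{=}u):u\in\mc{U}\}$, so ``retaining the full conditional distribution at every surviving support point'' is not automatic with that version, and without it the Markov chains do not carry over for free. If instead you insist on keeping only original atoms, plain Carath\'eodory gives $d{+}1$ points and each of your cardinality counts is off by one. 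The paper threads this needle differently: it imposes only $d{-}1$ equality constraints and then \emph{optimizes} the remaining functional over the constrained polytope; by the Dubins--Klee refinement the optimum sits at an extreme point supported on at most $d$ of the \emph{original} atoms, delivering simultaneously the tight cardinality and the inherited Markov structure, at the acceptable price of an inequality on $I(XY;V|U)$.

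There is also a gap in your handling of the ``furthermore'' clause. From $Z-U-V$ you correctly get $P(Z|V{=}v,U'{=}u')=P(Z|U'{=}u')$, but this does \emph{not} imply the factorization $P(XYZ|V{=}v,U'{=}u')=P(XY|V{=}v,U'{=}u')\,P(Z|U'{=}u')$; that would be $XY-UV-Z$, a strictly stronger hypothesis. Without it, preserving only the $|\mc{X}||\mc{Y}|{-}1$ values of $P(XY|U'{=}u')$ need not preserve $P(XYZ|U'{=}u')$, and then $P(X'Y'Z')$, $I(X'Y'Z';U')$ and the chain $X'Y'-U'-Z'$ can all fail. Concretely: take $U,Y$ trivial, $V,Z$ i.i.d.\ fair bits, and $X=V\oplus Z$; all three hypotheses $XY-U-Z$, $X-UV-Y$, $Z-U-V$ hold, yet collapsing $V'$ to a constant keeps $P(X)$ while forcing $X'=Z'$. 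The paper's parenthetical makes the same tacit leap; in its application to the converse, however, the stronger chain $Z-U-XYV$ (hence $XY-UV-Z$) actually holds, which is what really justifies the reduced count.
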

\begin{proof}
For the given distribution $P(XYZUV)$, let $\{P(XYZ|vu)\}_{v\in\mc{V},u\in\mc{U}}$ and $\{P(XYZ|u)\}_{u\in\mc{U}}$ be the associated conditional distributions.  For each fixed $u\in\mc{U}$, let $\Lambda_u$ be the collection of conditional distributions over $\mc{V}$ such that $\lambda(v|u)\in\Lambda_u$ if $\sum_{v}P(XYZ|uv)\lambda(v|u)=P(XYZ|u)$.  This represents a total of $N=|\mc{X}||\mc{Y}||\mc{Z}|-1$ linear constraints on the $\lambda(v|u)$ (note if $Z-U-V$ also holds, then $\sum_{v}P(XYZ|uv)\lambda(v|u)=P(XYZ|u)$ reduces to $\sum_{v}P(XY|uv)\lambda(v|u)=P(XY|u)$ which represents a total of $N=|\mc{X}||\mc{Y}|-1$ linear constraints on the $\lambda(v|u)$).  Now $\Lambda_u$ is convex and the set $\{\sum_{v}H(XY|u,v)\lambda(v|u):\lambda\in\Lambda_u\}$ will obtain both its maximum and minimum at an extreme point of $\Lambda_u$.  Then an application of Carath\'{e}odory's Theorem (Lemma \ref{Lem:Carathedory2}) guarantees that such an extreme point is a distribution over $\mc{V}$ with no more than $N+1$ nonzero probability values \cite{Klee-1963a}.  Hence by a conditional relabeling of the $v$, we have a subset $\mc{V}'\subset\mc{V}$ with $|\mc{V}'|\leq N+1$ and a collection of conditional distributions $\lambda'(v|u)$ over $\mc{V}'$ such that
\begin{subequations}
\begin{align}
\label{Eq:Dubin1a}
\sum_{v\in\mc{V}'}P(XYZ|uv)\lambda'(v|u)&=\sum_{v\in\mc{V}}P(XYZ|u)\quad\forall u\in\mc{U},\\
\label{Eq:Dubin1b}
\sum_{v\in\mc{V}'}H(XY|uv)\lambda'(v|u)&\geq\sum_{v\in\mc{V}}H(XY|V,u)\quad\forall u\in\mc{U}.
\end{align}
\end{subequations}

We now perform a similar argument by letting $\Gamma$ be the set of all distributions over $\mc{U}$ such that $\gamma(u)\in\Gamma$ if \\$\sum_{u\in\mc{U}}P(XYZ|u)\gamma(u)=P(XYZ)$ and $\sum_{u\in\mc{U}}H(XYZ|U=u)\gamma(u)=H(XYZ|U)$.  This represents $|\mc{X}||\mc{Y}||\mc{Z}|$ linear constraints on $\gamma$, and we seek the minimum value of the set $\{\sum_{u}[H(XY|u)-H(XY|V,u)]\gamma(u):\gamma\in\Gamma\}$.  A second application of Carath\'{e}odory's Theorem ensures the existence of a distribution $\gamma'(u)$ ranging over $\mc{U}'\subset\mc{U}$ with $|\mc{U}'|\leq |\mc{X}||\mc{Y}||\mc{Z}|+1$ for which
\begin{subequations}
\begin{align}
\label{Eq:Dubin2a}
\sum_{u\in\mc{U}'}P(XYZ|u)\gamma'(u)&=P(XYZ),\\
\label{Eq:Dubin2b}
\sum_{u\in\mc{U}'}H(XYZ|u)\gamma'(u)&=H(XYZ|U),\\
\label{Eq:Dubin2c}
\sum_{u\in\mc{U}}[H(XY|u)-H(XY|V,u)]\gamma'(u)&\leq H(XY|U)-H(XY|UV)=I(XY;V|U).
\end{align}
\end{subequations}
This completes the construction of random variables $X'Y'Z'U'V'$ whose joint distribution is given by $P(X'Y'Z'U'V'):=P(XYZ|uv)\lambda'(v|u)\gamma'(u)$.  By its definition and by Eq. \eqref{Eq:Dubin1a}, $X'Y'Z'U'V'$ inherits whatever Markov chain properties are present in $XYZUV$.  Eq. \eqref{Eq:Dubin2a} gives $P(X'Y'Z')=P(XYZ)$, and combining this with Eq. \eqref{Eq:Dubin2b} yields $I(X'Y'Z';U')=I(XYZ;U)$.  Finally, combining Eq. \eqref{Eq:Dubin2c} and Eq. \eqref{Eq:Dubin1b} gives
\begin{equation}
I(X'Y';V'|U')\leq I(XY;V|U),
\end{equation}
concluding the proof.
\end{proof}

\medskip
\begin{lemma}[Carath\'{e}odory's Theorem~\cite{Rockafellar-1996a}]
\label{Lem:Carathedory2}
Let $S$ be a subset of $\mathbb{R}^n$ and $\text{conv}(S)$ its convex hull.  
Then any $x \in \text{conv}(S)$ can be expressed as a convex combination of at 
most $n+1$ elements of $S$.
\qed
\end{lemma}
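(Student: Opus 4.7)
The plan is to give the standard affine-dependence argument. First, I would fix $x \in \mathrm{conv}(S)$ and choose an expression
\[
  x = \sum_{i=1}^{k} \lambda_i s_i, \qquad s_i \in S,\ \lambda_i > 0,\ \sum_{i=1}^k \lambda_i = 1,
\]
with $k$ \emph{minimal} among all such convex representations of $x$. The goal is then to show $k \leq n+1$ by deriving a contradiction if $k \geq n+2$.

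Second, I would exploit affine dependence: when $k \geq n+2$, the $k-1$ vectors $s_2 - s_1,\ldots,s_k - s_1$ lie in $\mathbb{R}^n$ and are more than $n$ in number, hence are linearly dependent. This yields scalars $\mu_2,\ldots,\mu_k$, not all zero, with $\sum_{i=2}^k \mu_i(s_i - s_1) = 0$. Setting $\mu_1 := -\sum_{i=2}^k \mu_i$, I obtain coefficients $\mu_1,\ldots,\mu_k$, not all zero, satisfying
\[
  \sum_{i=1}^k \mu_i s_i = 0, \qquad \sum_{i=1}^k \mu_i = 0.
\]

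Third, I would perform the reduction step. For any real $t$,
\[
  x = \sum_{i=1}^k (\lambda_i - t\mu_i)\, s_i, \qquad \sum_{i=1}^k (\lambda_i - t\mu_i) = 1.
\]
Since $\sum_i \mu_i = 0$ and not all $\mu_i$ vanish, at least one $\mu_i$ is strictly positive. Taking $t^\star := \min\{\lambda_i/\mu_i : \mu_i > 0\}$, all new coefficients $\lambda_i - t^\star \mu_i$ remain nonnegative and at least one becomes zero. This exhibits $x$ as a convex combination of at most $k-1$ elements of $S$, contradicting the minimality of $k$. Hence $k \leq n+1$, as claimed.

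The main conceptual step, rather than any calculation, is the passage from a linear combination of $k$ points in $\mathbb{R}^n$ to an \emph{affine} dependence via the differences $s_i - s_1$; this is what yields the bound $n+1$ (not $n$), and it is where one must be careful to preserve the unit-sum condition $\sum_i \mu_i = 0$ so that the perturbed coefficients continue to sum to one. Everything else is a routine verification that the perturbation by $t^\star$ keeps the weights in the simplex while annihilating at least one of them.
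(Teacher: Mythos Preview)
Your argument is correct and is the standard affine-dependence proof of Carath\'eodory's theorem. The paper does not actually prove this lemma: it is stated with a citation to Rockafellar and closed with a \IEEEQED, so there is no ``paper's own proof'' to compare against; your proposal simply fills in the classical argument that the authors chose to omit.
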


\medskip
\begin{remark}
An application of Carath\'{e}odory's Theorem shows that if the elements of $\text{conv}(S)$ are further required to satisfy $d$ linear constraints, then the resulting set is convex with extreme points being convex combinations of at most $d+1$ extreme points of $\text{conv}(S)$ \cite{Klee-1963a}.
\end{remark}

\section{Achievability}
\label{Sect:Coding}
Let $XYZUV$ be random variables with joint distribution $P(XYZUV)$ satisfying (1) $Q(XYZ)=P(XYZ)$ and (2) $X-VU-Y$ and $XY-U-Z$.  In what follows, we 
let $T_{[U]_\delta}^n$ denote the set of all $\delta$-typical sequences 
with respect to random variable $U$ having distribution $P(U)$. 
Recall that a sequence $\mbf{u}\in\mc{U}^n$ is $\delta$-typical if 
$\left|\frac{N(u|\mbf{u})}{n}-P(u)\right| \leq \delta$ for all 
$u\in\mc{U}$ \cite{Csiszar-1978a}.

Our code makes repeated use of Wyner's original code.  
The following is proven in \cite{Wyner-1975a}, where here we have modified 
the statement using Pinsker's inequality, $D(P_1||P_2)\geq \tfrac{1}{2}\|P_1-P_2\|_1^2$, 
to obtain a bound on the variational distance. See also later works by Han
and Verd\'u \cite{HanVerdu-1992,HanVerdu-1993} and Ahlswede \cite{Ahlswede-2006},
where more general versions were proved (Ref. \cite{AhlswedeWinter-2002} 
contains a quantum analogue).

\begin{lemma}[Wyner {\cite[Thm. 6.3]{Wyner-1975a}}]
\label{Lem:Wyner}
Let $AB$ be random variables over $\mc{A}\times\mc{B}$ with joint distribution $P(AB)$,
and let $R> I(A;B)$.  
For $\epsilon>0$ and sufficiently large $n$, there exists a subset 
$\beta\subset T^n_{[B]_\delta}\subset\mc{B}^n$ of size 
$|\beta|=\lfloor 2^{nR}\rfloor$ such that for
\begin{equation}
  \hat{P}^{(n)}(\mbf{a})
    =\frac{1}{|\beta|}\sum_{\mbf{b}\in\beta}P^{(n)}(\mbf{a}|\mbf{b})\quad\text{for}\quad\mbf{a}\in\mc{A}^n,
\end{equation}
it holds that $\|P^{(n)}(A^n)-\hat{P}^{(n)}(A^n)\|_1 \leq \epsilon$.
\end{lemma}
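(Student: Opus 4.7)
The plan is the standard random coding argument underlying Wyner's soft-covering lemma. First I would generate the codebook probabilistically: sample $\mbf{B}_1,\ldots,\mbf{B}_{|\beta|}$ i.i.d.\ according to $P^{(n)}(B^n)$ conditioned on the typical set, and define the random mixture $\hat{P}^{(n)}(\mbf{a}) = \tfrac{1}{|\beta|}\sum_{i=1}^{|\beta|} P^{(n)}(\mbf{a}|\mbf{B}_i)$. Since $P^{(n)}(T^n_{[B]_\delta}) \to 1$, this conditioning perturbs the induced distributions by a vanishing amount and automatically ensures $\beta \subset T^n_{[B]_\delta}$. The starting identity is $\bbE\hat{P}^{(n)}(\mbf{a}) = P^{(n)}(\mbf{a})$; the goal then reduces to showing $\bbE_\beta \|P^{(n)} - \hat{P}^{(n)}\|_1 \leq \epsilon$ for $n$ sufficiently large, since at least one realization of $\beta$ attains the average.

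I would next split $\|P^{(n)} - \hat{P}^{(n)}\|_1$ into contributions from atypical and typical $\mbf{a}\in\mc{A}^n$. The atypical contribution is bounded by $P^{(n)}((T^n_{[A]_\delta})^c) + \bbE\hat{P}^{(n)}((T^n_{[A]_\delta})^c)$, and vanishes exponentially by standard typicality estimates. For typical $\mbf{a}$, the joint-typicality bounds $P^{(n)}(\mbf{a}|\mbf{b}) \leq 2^{-n(H(A|B) - O(\delta))}$ on jointly typical $(\mbf{a},\mbf{b})$ and $P^{(n)}(\mbf{a}) \geq 2^{-n(H(A) + O(\delta))}$ are the core analytic inputs. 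For fixed typical $\mbf{a}$, the expected number of codewords forming a jointly typical pair with $\mbf{a}$ is of order $|\beta|\cdot 2^{-n(I(A;B) + O(\delta))}$, which is exponentially large exactly when $R > I(A;B) + O(\delta)$; in this regime $\hat{P}^{(n)}(\mbf{a})$ is a sum of many small i.i.d.\ contributions.

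A Chernoff or second-moment bound then yields that with probability $1 - 2^{-\Omega(n)}$, the ratio $\hat{P}^{(n)}(\mbf{a})/P^{(n)}(\mbf{a})$ lies in $(1-\eta,1+\eta)$; a union bound over the at most $2^{n(H(A)+\delta)}$ typical sequences is absorbed by the exponentially small per-sequence failure probability. Summing contributions yields $\bbE_\beta \|P^{(n)} - \hat{P}^{(n)}\|_1 = o(1)$, and extracting any realization that does at least as well as the average completes the argument.

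The principal obstacle is the concentration step: without restricting codewords to the typical set, the variance of $P^{(n)}(\mbf{a}|\mbf{B})$ is dominated by atypical outlier values of $\mbf{B}$, and a naive second-moment estimate fails. Confining codewords to $T^n_{[B]_\delta}$, so that each $P^{(n)}(\mbf{a}|\mbf{B}_i)$ in the sum obeys the exponentially small joint-typicality upper bound, is precisely what tames this variance and makes the union bound succeed. Alternatively, one can follow Wyner's original route of bounding $\bbE[D(P^{(n)}\|\hat{P}^{(n)})]$ in KL divergence directly and then invoking Pinsker's inequality as in the lemma's setup; the covering combinatorics at the core are essentially identical.
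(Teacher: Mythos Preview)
The paper does not give its own proof of this lemma; it simply cites Wyner's original Theorem~6.3 and notes that Pinsker's inequality converts Wyner's KL-divergence bound into the stated $\ell_1$ bound. Your proposal is a correct outline, and your closing remark already identifies Wyner's actual route: he bounds the expected relative entropy $\bbE_\beta D\bigl(P^{(n)}\|\hat P^{(n)}\bigr)$ directly via a typicality/counting computation rather than a concentration inequality, obtains $\bbE_\beta D \to 0$ when $R>I(A;B)$, and Pinsker then finishes. Your primary argument is instead the modern soft-covering/resolvability approach in the spirit of Han--Verd\'u and Cuff (also cited by the paper): random codebook, typicality truncation, and concentration of $\hat P^{(n)}(\mbf a)$ around its mean. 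This buys a bit more---in its sharper forms one gets exponential decay of the $\ell_1$ error and high-probability statements about the random codebook---whereas Wyner's KL computation is shorter but yields only the averaged statement. One small caution: you mix the goal ``bound $\bbE_\beta\|\cdot\|_1$'' with a Chernoff-plus-union-bound step; for the union bound over $2^{n(H(A)+\delta)}$ typical $\mbf a$ to succeed you genuinely need the Chernoff route (the per-$\mbf a$ failure probability is in fact doubly exponentially small in $n$, so it works), while a bare second-moment/Chebyshev bound would not survive that union bound and is better paired with a direct Cauchy--Schwarz estimate on $\bbE_\beta\|\cdot\|_1$. Either refinement closes the argument.
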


\medskip
Identify $A:=XYZ$ and $B:=U$ in Lemma \ref{Lem:Wyner}. Thus, for $n$ sufficiently large, we can find a subset $\mc{W}_P\subset\mc{U}^n$ such that $|\mc{W}_P|=\lfloor 2^{nR_P}\rfloor$ with
\begin{equation}
\label{Eq:PublictCodeRate}
R_P=I(XYZ:U)+\delta,
\end{equation} 
and 
\begin{equation}
\label{Eq:ErrorApprox1}
\left\|Q^{(n)}(\mbf{x},\mbf{y},\mbf{z})-\frac{1}{|\mc{W}_P|}\sum_{\mbf{u}\in\mc{W}_P}P^{(n)}(\mbf{x},\mbf{y}|\mbf{u})P^{(n)}(\mbf{z}|\mbf{u})\right\|_1 \leq \epsilon,
\end{equation}
where we have used the Markov chain $XY-U-Z$. Here, $P^{(n)}(\mbf{z}|\mbf{u})$ will be the encoder employed by the collaborative third party.

We next consider the term
\begin{equation}
P^{(n)}(\mbf{x},\mbf{y}|\mbf{u})=\prod_{u\in\mc{U}}P^{(N(u|\mbf{u}))}(\mbf{x}_u,\mbf{y}_u|u),
\end{equation}
where $(\mbf{x}_u,\mbf{y}_u)$ is a sequence of length $N(u|\mbf{u})$ that occurs with the event $U=u$.  Knowing that $\mbf{u}\in T^n_{[U]_\delta}$, we let $n_u:=\lfloor n(P(u)+\delta)\rfloor$, and for each $u\in\mc{U}$, we apply Lemma \ref{Lem:Wyner} on the conditional distribution $P(XYV|U=u)$ with the choice $A:=XY$ and $B:=V$.  This will generate a collection of codeword sets $\alpha_u$, each with respective size $|\alpha_u|=\lfloor 2^{n_u(I(XY:V|U=u)+\delta)}\rfloor$.  Furthermore,
\begin{align}
\hat{P}^{(n_u)}(\mbf{x}_u,\mbf{y}_u|u):=\frac{1}{|\alpha_u|}\sum_{\mbf{v}\in\alpha_u}P^{(n_u)}(\mbf{x}_u,\mbf{y}_u|u,\mbf{v}_u)=\frac{1}{|\alpha_u|}\sum_{\mbf{v}\in\alpha_u}P^{(n_u)}(\mbf{x}_u|u,\mbf{v}_u)P^{(n_u)}(\mbf{y}_u|u,\mbf{v}_u)
\end{align}
satisfies $||\hat{P}^{(n_u)}(\mbf{x}_u,\mbf{y}_u|u)-P^{(n_u)}(\mbf{x}_u,\mbf{y}_u|u)||_1<\epsilon$.  In the previous equation, the Markov chain $X-UV-Y$ has been employed.  For each $u\in\mc{U}$ and $\mbf{v}_u\in\alpha_u$, let $\hat{P}^{(N(u|\mbf{u}))}(\mbf{x}_u|u,\mbf{v}_u)$ denote the marginal distribution obtained from $P^{(n_u)}(\mbf{x}_u|u,\mbf{v}_u)$ by summing over the last $n_u-N(u|\mbf{v}_u)$ events.  Let $\hat{P}^{(N(u|\mbf{u}))}(\mbf{y}_u|u,\mbf{v}_u)$ be defined likewise.  Thus,  
\[||\hat{\hat{P}}^{N(u|\mbf{u})}(\mbf{x}_u,\mbf{y}_u|u)-P^{N(u|\mbf{u})}(\mbf{x}_u,\mbf{y}_u|u)||_1<\epsilon,\] where
\begin{equation}
\hat{\hat{P}}^{N(u|\mbf{u})}(\mbf{x}_u,\mbf{y}_u|u):=\frac{1}{|\alpha_u|}\sum_{\mbf{v}_u\in\alpha_u}\hat{P}^{(N(u|\mbf{u}))}(\mbf{x}_u|u,\mbf{v}_u)\hat{P}^{(N(u|\mbf{u}))}(\mbf{y}_u|u,\mbf{v}_u).
\end{equation}
We now paste together the different codes to form the code set $\mc{W}_K=\prod_{u\in\mc{U}}\alpha_u$.  For any typical $\mbf{u}$ and $\mbf{v}\in\mc{W}_K$, we define the local generators
\begin{align}
\hat{\hat{P}}^{(n)}(\mbf{x}|\mbf{u},\mbf{v})&:=\prod_{u\in\mc{U}}\hat{P}^{(N(u|\mbf{u}))}(\mbf{x}_u|u,\mbf{v}_u)\notag\\
\hat{\hat{P}}^{(n)}(\mbf{y}|\mbf{u},\mbf{v})&:=\prod_{u\in\mc{U}}\hat{P}^{(N(u|\mbf{u}))}(\mbf{y}_u|u,\mbf{v}_u),
\end{align}
which satisfy
\begin{equation}
 \left\|P^{(n)}(\mbf{x},\mbf{y}|\mbf{u})-\frac{1}{|\mc{W}_K|}\sum_{\mbf{v}\in\mc{W}_K}\hat{\hat{P}}^{(n)}(\mbf{x}|\mbf{u},\mbf{v})\hat{\hat{P}}^{(n)}(\mbf{y}|\mbf{u},\mbf{v})\right\|_1<|\mc{U}|\epsilon.
\end{equation}
The size of $\mc{W}_K$ is bounded by
\begin{align}
\log |\mc{W}_K|=\sum_{u\in\mc{U}} \log |\alpha_u|&\leq \sum_{u\in\mc{U}}n_u(I(XY:V|U=u)+\delta)\notag\\
&\leq n(I(XY:V|U)+O(\delta)).
\end{align}
Combining this simulation of $P^{(n)}(\mbf{x},\mbf{y}|\mbf{u})$ with Eq. \eqref{Eq:ErrorApprox1} gives the final error bound
\begin{equation}
  \label{Eq:ErrorApprox3}
  \left\| Q^{(n)}(\mbf{x},\mbf{y},\mbf{z})-\hat{P}^{(n)}(\mbf{x},\mbf{y},\mbf{z}) \right\|_1
     \leq (|\mc{U}|+1)\epsilon.
\end{equation}
where
\[
  \hat{P}^{(n)}(\mbf{x},\mbf{y},\mbf{z})
      =\frac{1}{|\mc{W}_P|}\frac{1}{|\mc{W}_K|}
          \sum_{\mbf{u}\in\mc{W}_P}
          \sum_{\mbf{v}\in\mc{W}_K} \hat{\hat{P}}^{(n)}(\mbf{x}|\mbf{u},\mbf{v})
                                    \hat{\hat{P}}^{(n)}(\mbf{y}|\mbf{u},\mbf{v})P^{(n)}(\mbf{z}|\mbf{u}).
\]
Since $|\mc{U}|\leq |\mc{X}||\mc{Y}||\mc{Z}|$ and $|\mc{V}|\leq |\mc{X}||\mc{Y}|$, 
the bounds on $R_P$, $R_K$ and $\|Q^{(n)}-\hat{P}^{(n)}\|_1$ can be made arbitrarily 
close to $I(XYZ:U)$, $I(XY:V|U)$ and zero, respectively.

\qed

\section{Conclusion}
In this paper we have introduced the problem of tripartite correlation 
generation using public and private correlations.  This can be seen as 
a collaborative alternative to the cryptographic problem of secrecy formation.  
We have found that despite the two different natures of the problem, 
the optimal secret correlation rates have a very similar structure.  
We have completely characterized the public-vs-private rate region for 
the collaborative scenario.  One point of interest is when the public 
correlation rate is minimum (point $\alpha$ in Fig. \ref{Fig:rateregion}), 
and another is when the secret correlation rate is zero (point $\beta$ in 
Fig. \ref{Fig:rateregion}).  We have shown that the optimal exchange of 
private to public correlations does not always involve a trivial 
publicizing of private information.  However, it is an interesting open 
problem to determine the slope of the line connecting $\alpha$ and $\beta$ 
for a general distribution, in particular to understand what limits
there are, if any, on the exchange rate of private to public correlation rate.

Further afield, following the example of Ref. \cite{Winter-2005a}, one
could ask for the benefit of using entanglement instead of, or
in addition to, the private and public shared randomness. We leave
this and other questions for future investigations.

\section*{Acknowledgments}
EC was supported by the National Science Foundation (NSF) Early CAREER Award No.~1352326.  
MH is supported by an ARC Future Fellowship under Grant FT140100574. 
AW was supported by the European Commission (STREP ``RAQUEL''), 
the European Research Council (Advanced Grant ``IRQUAT''), 
and the Spanish MINECO (project FIS2008-01236) with FEDER funds.

\appendix

\section{Achievability Proof of Secret Key Cost in the Adversarial Scenario}
Here we review the achievability component of Theorem~\ref{thm}.  The coding for Alice and Bob is the same as described in Section \ref{Sect:Coding}.  Let $XYZUV$ be random variables obtaining the minimum in Theorem~\ref{thm}, 
and let $P(XYZUV)$ denote their joint distribution so that the marginal on 
$XYZ$ is $Q(XYZ)$. The public correlation (communication) is $U^n$ 
($n$ i.i.d. realisations of $U$). Let us restrict attention to the typical 
subset for which the relative frequency of each letter $u$ in $U^n$ is close 
to $P(U=u)$; in particular, 
$\left|P(U=u)-\frac{N(u|\mbf{u})}{n}\right|\leq \delta$. For the set of 
positions where $u$ occurs, we can employ Lemma~\ref{Lem:Wyner}. 

Consider $A=XY|_{U=u}$ and $B=V|_{U=u}$  in Lemma \ref{Lem:Wyner}.
Thus for $n_u:=\lfloor n(P(u)+\delta)\rfloor$ sufficiently large, we can 
find a subset $\alpha_{u}\subset\mc{V}^{n_{u}}$ such that 
$|\alpha_{u}|=\lfloor 2^{n_u(I(XY:V|U=u)+\delta)}\rfloor$
with
\begin{equation}
\label{Eq:WErrorApprox1}
\left\|P^{(n_{u})}(X^{n_u}Y^{n_u}|u)-\hat{P}^{(n_{u})}(X^{n_u}Y^{n_u}|u)\right\|_1 \leq \epsilon,
\end{equation}
where 
\[
\hat{P}^{(n_{u})}(X^{n_u}Y^{n_u}|u):=\frac{1}{|\alpha_{u}|}\sum_{\mbf{v}\in\alpha_{u}}P^{(n_{u})}(X^{n_u}Y^{n_u}|u,\mbf{v}).
\]
We paste these codes together and define local channels for Alice and Bob $\hat{\hat{P}}^{(n)}(X^n|\mbf{u},\mbf{v})\hat{\hat{P}}^{(n)}(Y^n|\mbf{u},\mbf{v})$ which, for each $\mbf{u}\in T^n_{[U]_\delta}$, samples from the concatenated code and discards the extra letter occurrences not found in $\mbf{u}$ (see Section  \ref{Sect:Coding}).  With $\mc{W}_K$ denoting the set of code words, this generates the simulation 
\[\tilde{P}^{(n)}(X^nY^n|\mbf{u}):=\frac{1}{|\mc{W}_K|}\sum_{\mbf{v}\in\mc{W}_K}\hat{\hat{P}}^{(n)}(X^n|\mbf{u},\mbf{v})\hat{\hat{P}}^{(n)}(Y^n|\mbf{u},\mbf{v})\]
which satisfies 
\begin{equation}
\label{Eq:EndStep1}
 \left\|P^{(n)}(X^nY^n|\mbf{u})-\tilde{P}^{(n)}(X^nY^n|\mbf{u})\right\|_1<|\mc{U}|\epsilon.
\end{equation}
The size of $\mc{W}_K$ is bounded by
\begin{align}
\log |\mc{W}_K|=\sum_{u\in\mc{U}} \log |\alpha_u|&\leq \sum_{u\in\mc{U}}n_u(I(XY:V|U=u)+\delta)\notag\\
&\leq n(I(XY:V|U)+O(\delta)).
\end{align}

\noindent
\emph{Charlie's Simulation}:
Eq. \eqref{Eq:EndStep1} holds for every $\mbf{u}\in T_{[U]_\delta}^{(n)}$.  The next question is how we choose our code words $\mbf{u}$, which represents the public communication.  If we just took $T^n_{[U]_\delta}$ as the codebook, then the public correlation rate would be $H(U)$.  But we can actually do better, and we will use Wyner's theorem again to construct  a smaller codebook. 

From Wyner, for $n$ sufficiently large there exists a subset $\beta\subset T^n_{[U]_\delta}$ with $|\beta|\leq 2^{n (I(Z:U)+\delta)}$ such that
\[
  \hat{P}^{(n)}(Z^n):=\frac{1}{|\beta|}\sum_{\mbf{u}\in\beta} P^{(n)}(Z^n|\mbf{u})
\]
and
\begin{equation}
\label{Eq:Eve-siumulate}
  \left\| \hat{P}^{(n)}(Z^n)-Q^{(n)}(Z^n) \right\|_1 \leq \epsilon.
\end{equation}
Let $\tilde{U}$ be uniformly distributed over $\beta$ and define
the channel $\tilde{U}|Z^n$ by
\begin{equation}
\label{Eq:Eve_Channel}
\Phi^{(n)}(\mbf{u}|\mbf{z})=\frac{1}{|\beta|}\frac{P^{(n)}(\mbf{z}|\mbf{u})}{\hat{P}^{(n)}(\mbf{z})}\quad\text{for $\mbf{u}\in\beta$}.
\end{equation}
When Charlie applies $\Phi^{(n)}$ to his part of distribution $Q^{(n)}(X^nY^nZ^n)$, the new distribution is given by
\begin{align}
\tilde{Q}^{(n)}(X^nY^nZ^n\tilde{U}):&=Q^{(n)}(X^nY^n|Z^n)\Phi^{(n)}(\tilde{U}|Z^n)Q^{(n)}(Z^n).
\label{eq:tildeQ}
\end{align}
Note that the reduced distribution $\tilde{Q}^{(n)}(X^nY^n\tilde{U})$ is precisely what is obtained when Charlie attempts to simulate the public communication $\tilde{U}$ by acting on $X^nY^nZ^n$ with $\Phi^{(n)}$.  Thus, we want to prove that $\tilde{Q}^{(n)}(X^nY^n\tilde{U})$ is close to the distribution generated by $\tilde{P}^{(n)}(\mbf{x},\mbf{y}|\mbf{u})$ when $\mbf{u}$ is chosen uniformly from $\beta$, which we denote by
\begin{equation*}
\tilde{P}^{(n)}(X^nY^n\tilde{U}):=\frac{1}{|\beta|}\tilde{P}^{(n)}(X^nY^n|\tilde{U}).
\end{equation*}
To do this, we first bound the difference
\begin{align}
&\!\!\!\!\!\!\!\!\!\!\!\!\!\!\!\!\!\!\!\!\!\!\!\!\!\!\!\!\!\!
\left\|\frac{1}{|\beta|}P^{(n)}(X^nY^n|\tilde{U})-\tilde{Q}^{(n)}(X^nY^n\tilde{U})\right\|_1\notag\\
&=\left\|\frac{1}{|\beta|}P^{(n)}(X^nY^n|\tilde{U})-\sum_{\mbf{z}\in\mc{Z}^n}Q^{(n)}(X^nY^n|\mbf{z})\Phi^{(n)}(\tilde{U}|\mbf{z})Q^{(n)}(\mbf{z})\right\|_1\notag\\
&\leq \left\|\frac{1}{|\beta|}P^{(n)}(X^nY^n|\tilde{U})-\sum_{\mbf{z}\in\mc{Z}^n}Q^{(n)}(X^nY^n|\mbf{z})\Phi^{(n)}(\tilde{U}|\mbf{z})\hat{P}^{(n)}(\mbf{z})\right\|_1+\epsilon\notag\\
&=\left\|\frac{1}{|\beta|}P^{(n)}(X^nY^n|\tilde{U})-\frac{1}{|\beta|}\sum_{\mbf{z}\in\mc{Z}^n}P^{(n)}(X^nY^n|\mbf{z})P^{(n)}(\mbf{z}|\tilde{U})\right\|_1+\epsilon\notag\\
&=\left\|\frac{1}{|\beta|}P^{(n)}(X^nY^n|\tilde{U})-\frac{1}{|\beta|}\sum_{\mbf{z}\in\mc{Z}^n}P^{(n)}(X^nY^n|\tilde{U}\mbf{z})P^{(n)}(\mbf{z}|\tilde{U})\right\|_1+\epsilon=\epsilon.
\end{align}
Here, we have used both Eqns. \eqref{Eq:Eve_Channel} and \eqref{Eq:Eve-siumulate}, and the last line follows from the Markov chain condition $XY-Z-U$.  Therefore, combining with Eq. \eqref{Eq:EndStep1}, we obtain the desired result that
\begin{equation}
  \left\| \tilde{Q}^{(n)}(X^nY^n\tilde{U})-\tilde{P}^{(n)}(X^nY^n\tilde{U}) \right\|_1
     \leq \epsilon(1+|\mc{U}|).
\end{equation}

To summarize the protocol, consider any $\delta,\epsilon>0$ and $n$ sufficiently large.  
Either Alice or Bob locally generates the random variable $\tilde{U}$ which is 
uniformly distributed over a set of size $|\beta|\leq 2^{n(I(Z;U)+\delta)}$.  
The value of $\tilde{U}$ is announced publicly.  
Sharing no more than $n(I(XY;V|U)+\delta)$ bits of secret correlation, 
Alice and Bob generate distribution $\tilde{P}^{(n)}(X^nY^n)$ which is 
jointly distributed with $\tilde{U}$ according to $\tilde{P}^{(n)}(X^nY^n\tilde{U})$.  
At the same time, we have shown the existence of a channel $\Phi^{(n)}$ 
such that when Charlie applies this to her part of $X^nY^nZ^n$, it 
generates the distribution $\tilde{Q}^{(n)}(X^nY^n\tilde{U})$ for which
\[
  \left\| \tilde{Q}^{(n)}(X^nY^n\tilde{U})-\tilde{P}^{(n)}(X^nY^n\tilde{U}) \right\|_1
            \leq \epsilon(1+|\mc{U}|).
\]
Therefore, we have satisfied the two components of the achievability criteria.


\end{document}